\newcommand{\one}{\mathbbm{1}}
\newcommand{\Q}{\operatorname{\mathsf{P}}}
\newcommand{\R}{\mathbb{R}}
\newcommand{\Xx}{\mathcal{X}}
\newcommand{\Pn}{\hat{\mathbb{P}}_n}
\newcommand{\PN}{\hat{\mathbb{P}}_N}
\newcommand{\Pnp}{\hat{\mathbb{P}}_{n+1}}
\newtheorem{propo}{Proposition}
\newtheorem{lemma}[propo]{Lemma}
\newtheorem{theorem}[propo]{Theorem}
\theoremstyle{definition}
\newtheorem{definition}[propo]{Definition}
\newtheorem{remark}[propo]{Remark}
\newtheorem{example}[propo]{Example}
\title{Residual Distribution Predictive Systems}
\date{}
\author{Sam Allen\thanks{Institute of Statistics, Karlsruhe Institute of Technology, Karlsruhe, Germany. \url{sam.allen@kit.edu}} \and Enrico Pescara\thanks{Seminar for Statistics, ETH Zurich, Zurich, Switzerland. \url{epescara@student.ethz.ch}} \and Johanna Ziegel\thanks{Seminar for Statistics, ETH Zurich, Zurich, Switzerland. \url{ziegel@stat.math.ethz.ch}} }
\begin{document}

\maketitle

\begin{abstract}
    Conformal predictive systems are sets of predictive distributions with theoretical out-of-sample calibration guarantees. The calibration guarantees are typically that the set of predictions contains a forecast distribution whose prediction intervals exhibit the correct marginal coverage at all levels. Conformal predictive systems are constructed using conformity measures that quantify how well possible outcomes conform with historical data. However, alternative methods have been proposed to construct predictive systems with more appealing theoretical properties. We study an approach to construct predictive systems that we term Residual Distribution Predictive Systems. In the split conformal setting, this approach nests conformal predictive systems with a popular class of conformity measures, providing an alternative perspective on the classical approach. In the full conformal setting, the two approaches differ, and the new approach has the advantage that it does not rely on a conformity measure satisfying fairly stringent requirements to ensure that the predictive system is well-defined; it can readily be implemented alongside any point-valued regression method to yield predictive systems with out-of-sample calibration guarantees. The empirical performance of this approach is assessed using simulated data, where it is found to perform competitively with conformal predictive systems. However, the new approach offers considerable scope for implementation with alternative regression methods.  
\end{abstract}

\section{Introduction}\label{sec:intro}

For forecasts to be useful for decision making, they must align statistically with what actually materialises. Such forecasts are said to be \emph{calibrated}. For example, if a binary event is predicted to occur with a certain probability, then it should manifest with this probability. Similarly, prediction intervals should contain the outcome with the desired coverage level, and full predictive distributions should yield calibrated prediction intervals. We focus here on probabilistic forecasts for real-valued outcomes, where the goal is to issue predictive distributions that are as informative as possible subject to being calibrated \citep{gneiting2007b}.

Forecasting is typically performed out-of-sample, with predictions issued for future outcomes that have not yet been observed. Ideally, a forecasting procedure would issue calibrated predictions for out-of-sample outcomes by construction, possibly under some weak assumptions on the data generating process. However, designing a forecasting procedure with such out-of-sample calibration guarantees is generally infeasible \citep{vovk2022book}. Alternatively, conformal prediction has recently become popular in the machine learning literature since it goes one step in this direction by issuing a set of probabilistic predictions (i.e. a credal set) that is guaranteed to contain an out-of-sample calibrated prediction, under the assumption of exchangeable data \citep{GammermanEtAl1998,ShaferVovk2008,vovk2017,FontanaEtAl2023}. If the size of the credal set is small, then any forecast within the set should be approximately calibrated. If the size of the credal set is large, then it suggests there is large epistemic uncertainty in the forecast \citep{GammermanVovk2002,HullermeierWaegeman2021}.

When issuing probabilistic forecasts for real-valued outcomes, the credal sets are generally referred to as \emph{predictive systems}. While they have received less attention in the literature than conformal prediction methods for classification and regression tasks, conformal predictive systems contain considerably more information; they directly provide prediction intervals with out-of-sample coverage guarantees at any coverage level; see for example \citet{vovk2017}.

The classical conformal prediction framework is based on conformity measures (or non-conformity scores), which measure how well possible covariate-observation pairs \emph{conform} with previously observed data. \cite{allen2025} demonstrate that this approach essentially corresponds to a particular example of a more general framework that constructs credal sets with out-of-sample calibration guarantees using forecasting procedures that are calibrated in-sample. \cite{van2025} show a similar result in the context of point forecasting. An obvious question is what other forecasting procedures are calibrated in-sample. \cite{allen2025} study in detail two approaches when interest is on probabilistic forecasts of real-valued outcomes: binning (or analogue) prediction methods, and isotonic distributional regression \citep[IDR;][]{HenziEtAl2021}. The latter approach provides a generalisation of the Venn-Abers predictors of \citet{VovkPetejETAL2015}.

In this work, we analyse a simple alternative forecasting procedure that is commonly employed in practice. The approach fits a regression model to some training data, and uses this to obtain a point forecast for the new outcome. This point forecast is then dressed with the residuals in a training data set to obtain a (discrete) forecast distribution. The resulting forecast distributions satisfy a popular notion of forecast calibration in-sample, and we therefore demonstrate how it can be used to obtain predictive systems that are guaranteed to contain a calibrated out-of-sample predictive distribution. We refer to these as \emph{Residual Distribution Predictive Systems}. 

This approach shares similarities with the classical conformal prediction framework, since conformity measures are typically constructed using regression residuals. We demonstrate that in a split conformal setting \citep{vovk2018b}, these two approaches actually coincide, offering an alternative perspective on the original conformal prediction system framework. In the full conformal setting, the two approaches differ, and while the classical approach requires that the conformity measure satisfies a fairly strict monotonicity condition \citep{vovk2017}, the novel approach does not. This facilitates the construction of conformal predictive systems using any regression method, without the need to check that the monotonicity condition is satisfied. However, in practice, some regression methods can yield predictive systems with excessively large thickness. In practice, we find that the two approaches yield predictive systems with similar predictive ability, though there is additional scope to apply the Residual Distribution Predictive System with more flexible regression models.

The following section defines predictive systems and notions of calibration, and gives the construction result for predictive systems with out-of-sample calibration guarantees. The classical conformal predictive systems are introduced in Section \ref{sec:cmps}, while the residual distribution predictive systems that we study are introduced in Section \ref{sec:rdps}. Some simulation results are provided in Section \ref{sec:simstudy}, before we conclude in Section \ref{sec:conc}. Code to reproduce the simulation results is available at \url{https://github.com/sallen12/RDPS_rep}.

\section{Predictive Systems}\label{sec:ps}

Suppose our goal is to predict an outcome $Y_{n+1} \in \R$ using a training data set $(x_1, y_1)$, \dots, $(x_n, y_n) \in \Xx \times \R$ and a new covariate $x_{n+1} \in \Xx$. A prediction for $Y_{n+1}$ is typically obtained using state-of-the-art statistical and machine learning models, which take the training data and the new covariate as inputs, and output a prediction for the unknown outcome. In the following, we assume that the predictions are probability distributions on $\R$, and we refer to such a model as a \emph{forecasting procedure}. The training data can be represented via the empirical distribution $\Pn = (1/n)\sum_{i=1}^n \delta_{(x_i, y_i)}$, and we denote the forecasting procedure corresponding to the training data $\Pn$ and covariate $x \in \Xx$ as the function $G_{x}^{\Pn} : \R \to [0, 1]$. In practice, the forecasting procedure is evaluated at $x = x_{n+1}$ to get a prediction for $Y_{n+1}$. Throughout, we use lower and upper case to denote deterministic and random elements, respectively, and we denote the underlying probability measure by $\Q$.

Ideally, the forecasting procedure should issue a forecast for $Y_{n+1}$ that is calibrated, in the sense that the forecast is statistically compatible with the corresponding outcome. There are several different ways to define forecast calibration, though it is most common in practice to assess \emph{probabilistic calibration}, which corresponds to the prediction intervals derived from the predictive distributions achieving the correct marginal coverage \citep{dawid1984,diebold1998,gneiting2007b}. Following \cite{allen2025}, we define probabilistic calibration as follows.
\\
\begin{definition}\label{def:probcal}
	A (random) forecast distribution $G$ is \emph{probabilistically calibrated} for $Y \in \R$ if
	\begin{equation}\label{eq:probcal}
	    \Q(G(Y) \le \alpha) \leq \alpha \leq \Q(G(Y-) < \alpha) \quad \text{for all } \alpha \in (0,1),
	\end{equation}
	where $G(y-) = \lim_{z \uparrow y} G(z)$.
\end{definition}
If the forecast distribution is continuous, probabilistic calibration implies that the forecast probability integral transform $G(Y)$ follows a standard uniform distribution. Stronger notions of calibration have also been proposed, which generally correspond to \eqref{eq:probcal} conditioned on more information \citep{GneitingRanjan2013,Tsyplakov2014,GneitingResin2023}, though probabilistic calibration remains a useful notion that can be leveraged in decision-making contexts to achieve asymptotically efficient decisions \citep[see][Section 7.7]{vovk2022book}. We can similarly define the calibration of a forecasting procedure. In this case, it is important to distinguish between \emph{in-sample} and \emph{out-of-sample} calibration.
\\
\begin{definition}\label{def:probcal_G}
    Let $(X_1, Y_1), \dots, (X_n, Y_n) \in \Xx \times \R$ be a random sample with empirical distribution $\Pn = (1/n)\sum_{i=1}^n \delta_{(X_i, Y_i)}$. A forecasting procedure $G$ is \emph{in-sample probabilistically calibrated} if
    $G^{\Pn}_{X_i}$ is a probabilistically calibrated forecast for $Y_i$ under $\Pn$, for all $i = 1, \dots, n$. That is,
    \[
    \Pn\big(G^{\Pn}_{X_i}(Y_i) \le \alpha\big) \le \alpha \le \Pn\big(G^{\Pn}_{X_i}(Y_i-) < \alpha\big) \quad \text{for all $\alpha \in (0,1)$ almost surely.}
    \]
\end{definition}

Analogously, a forecasting procedure $G$ could be defined as out-of-sample calibrated if, given a random covariate $X_{n+1}$, $G^{\Pn}_{X_{n+1}}$ is a probabilistically calibrated forecast for $Y_{n+1}$ under $\Q$. Since most forecasting tasks are out-of-sample, we generally desire forecasts that are out-of-sample calibrated. However, deriving a forecasting procedure that is guaranteed to issue out-of-sample calibrated forecasts is generally not possible \citep{vovk2022book}. Instead, conformal prediction has become popular since it provides a framework for obtaining credal sets that are guaranteed to contain an out-of-sample calibrated forecast, under the assumption of exchangeability. When dealing with probabilistic forecasts for real-valued outcomes, these credal sets are typically referred to as \emph{conformal predictive systems}. We define a predictive system as follows.
\\
\begin{definition}
	A \emph{predictive system} is a set $\Pi \subseteq \mathbb{R} \times [0,1]$ of the form
	\[
	\Pi = \{(y, \tau) \in \mathbb{R} \times [0,1] \mid \Pi_{\ell}(y) \leq \tau \leq \Pi_u(y)\},
	\]
	where the lower and upper bounds $\Pi_{\ell}, \Pi_u : \mathbb{R} \to [0,1]$ are increasing functions satisfying $\Pi_{\ell}(y) \leq \Pi_u(y)$ for all $y \in \mathbb{R}$, and
	\[
	\lim_{y \to -\infty} \Pi_{\ell}(y) = 0, \qquad \lim_{y \to \infty} \Pi_u(y) = 1.
	\]
	The \emph{thickness} of the predictive system $\Pi$ is defined as
	\[
	\operatorname{th}(\Pi) = \inf\{\varepsilon > 0 \mid \Pi_u(y) - \Pi_{\ell}(y) \leq \varepsilon \text{ for all but finitely many } y \in \mathbb{R}\}.
	\]
\end{definition}
A predictive system is therefore comprised of lower and upper bounds $\Pi_{\ell}, \Pi_u$, which can be interpreted as two stochastically ordered (possibly defective) distribution functions, with the region between the two bounds defining a set of probabilistic forecasts. A predictive distribution is said to be contained in a predictive system $\Pi$ if it lies between the system's lower and upper bounds, $\Pi_\ell$ and $\Pi_u$. Conformal predictive systems are constructed such that they are guaranteed to contain an out-of-sample probabilistically calibrated forecast distribution for $Y_{n+1}$. In practice, this guarantee is only useful if the difference between the bounds (the thickness) is not too large.
 
An obvious question is if and how we can construct predictive systems that are guaranteed to contain an out-of-sample calibrated forecast distribution. \cite{allen2025} describe how they can be obtained via a general framework that leverages in-sample calibrated forecasting procedures. In particular, given a procedure $G$, training data $(x_1, y_1), \dots, (x_n, y_n)$, and covariate $x_{n+1}$, a predictive system can be defined by the bounds
\begin{equation}\label{eq:Gbands}
    \Pi_{\ell,x_{n+1}}^{\Pn}(y) = \inf_{y' \in \mathbb{R}} G_{x_{n+1}}^{\Pnp(y')}(y), \qquad \Pi_{u,x_{n+1}}^{\Pn}(y) = \sup_{y' \in \mathbb{R}} G_{x_{n+1}}^{\Pnp(y')}(y), \quad y \in \R,
\end{equation}
where $\Pnp(y')$ is used to denote the empirical distribution of $(x_1, y_1), \dots, (x_n, y_n), (x_{n+1}, y')$, the training data along with the new covariate $x_{n+1}$ and an assumed value $y' \in \R$ of the new outcome. These bounds essentially correspond to the pointwise lowest and highest values that the predictive distribution can attain when the forecasting procedure is trained using the training data augmented by the new pair $(x_{n+1}, y')$, for all possible $y'$. If a large amount of training data is available, then this additional training data point will typically have less influence on the learned predictive distribution, resulting in a predictive system whose lower and upper bounds are close together (and vice versa). 

\citet[Theorem 1]{allen2025} demonstrate that if the forecasting procedure $G$ is in-sample calibrated, then the predictive system $\Pi$ generated by $G$ is guaranteed to contain an out-of-sample calibrated forecast for $Y_{n+1}$. 
\\
\begin{theorem}[\cite{allen2025}]\label{thm:intoout}
    Let $(X_1, Y_1), \dots, (X_{n+1}, Y_{n+1}) \in \Xx \times \R$ be exchangeable with empirical distribution $\Pn = (1/n)\sum_{i=1}^n \delta_{(X_i, Y_i)}$. If $G$ is an in-sample probabilistically calibrated forecasting procedure, then the predictive system defined by the bounds at \eqref{eq:Gbands} satisfies
	\begin{equation}\label{eq:OoS}
    \Q\left(\Pi_{\ell,X_{n+1}}^{\Pn}(Y_{n+1}) \le \alpha\right) \leq \alpha \leq
    \Q\left(\Pi_{u,X_{n+1}}^{\Pn}(Y_{n+1}-) < \alpha\right) \quad \text{for all $ \alpha \in (0,1) $}.
	\end{equation}
\end{theorem}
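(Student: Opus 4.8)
\emph{Proof idea.} The plan is to exhibit a genuine forecast distribution that is contained in the predictive system and is probabilistically calibrated for $Y_{n+1}$ out of sample, and then to transfer its calibration inequalities to the bounds by monotonicity. That forecast is the ``oracle'' $G_{X_{n+1}}^{\Pnp(Y_{n+1})}$, obtained by feeding the true (but unknown) outcome $Y_{n+1}$ into the procedure as the assumed value $y'$. It is not an admissible forecast, but it is contained in the predictive system: since $y'=Y_{n+1}$ is one of the values over which the infimum and supremum in \eqref{eq:Gbands} range, we have $\Pi_{\ell,X_{n+1}}^{\Pn}(y)\le G_{X_{n+1}}^{\Pnp(Y_{n+1})}(y)\le \Pi_{u,X_{n+1}}^{\Pn}(y)$ for every $y\in\R$, and the same sandwich survives passing to left limits at $y=Y_{n+1}$.

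First I would note that $\Pnp(Y_{n+1})=\tfrac{1}{n+1}\sum_{i=1}^{n+1}\delta_{(X_i,Y_i)}$ is exactly the empirical distribution of the $(n+1)$-point sample $(X_1,Y_1),\dots,(X_{n+1},Y_{n+1})$, so the in-sample probabilistic calibration of $G$ may be applied at sample size $n+1$. Setting $V_i:=G_{X_i}^{\Pnp(Y_{n+1})}(Y_i)$ and $W_i:=G_{X_i}^{\Pnp(Y_{n+1})}(Y_i-)$, Definition~\ref{def:probcal_G} (with $n+1$ in place of $n$) gives, almost surely,
\[
\frac{1}{n+1}\sum_{i=1}^{n+1}\one\{V_i\le\alpha\}\ \le\ \alpha\ \le\ \frac{1}{n+1}\sum_{i=1}^{n+1}\one\{W_i<\alpha\}\qquad\text{for all }\alpha\in(0,1)
\]
(the order of the ``almost surely'' and ``for all $\alpha$'' quantifiers is immaterial, since one may restrict to a countable dense set of $\alpha$ and use monotonicity of the two sums in $\alpha$).

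The crux is to turn this into an out-of-sample statement via exchangeability. Because $G$ depends on its training data only through the empirical distribution $\Pnp(Y_{n+1})$, which is permutation-invariant, relabelling the $n+1$ observations by a permutation $\pi$ leaves $\Pnp(Y_{n+1})$ fixed and sends $(V_1,\dots,V_{n+1})$ to $(V_{\pi(1)},\dots,V_{\pi(n+1)})$, and likewise the $W_i$. Hence exchangeability of $(X_1,Y_1),\dots,(X_{n+1},Y_{n+1})$ makes $(V_1,\dots,V_{n+1})$ and $(W_1,\dots,W_{n+1})$ exchangeable, so in particular each $V_i$ has the same distribution as $V_{n+1}=G_{X_{n+1}}^{\Pnp(Y_{n+1})}(Y_{n+1})$ and each $W_i$ the same distribution as $W_{n+1}=G_{X_{n+1}}^{\Pnp(Y_{n+1})}(Y_{n+1}-)$. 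Taking expectations in the display above, the averages collapse to single probabilities and we obtain
\[
\Q\big(G_{X_{n+1}}^{\Pnp(Y_{n+1})}(Y_{n+1})\le\alpha\big)\ \le\ \alpha\ \le\ \Q\big(G_{X_{n+1}}^{\Pnp(Y_{n+1})}(Y_{n+1}-)<\alpha\big),
\]
i.e.\ the oracle forecast is probabilistically calibrated for $Y_{n+1}$ in the sense of Definition~\ref{def:probcal}.

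To conclude, combine the last display with the sandwich of the first paragraph. From $G_{X_{n+1}}^{\Pnp(Y_{n+1})}(Y_{n+1})\le\Pi_{u,X_{n+1}}^{\Pn}(Y_{n+1})$ we get $\{\Pi_{u,X_{n+1}}^{\Pn}(Y_{n+1})\le\alpha\}\subseteq\{G_{X_{n+1}}^{\Pnp(Y_{n+1})}(Y_{n+1})\le\alpha\}$, and from $\Pi_{\ell,X_{n+1}}^{\Pn}(Y_{n+1}-)\le G_{X_{n+1}}^{\Pnp(Y_{n+1})}(Y_{n+1}-)$ we get $\{G_{X_{n+1}}^{\Pnp(Y_{n+1})}(Y_{n+1}-)<\alpha\}\subseteq\{\Pi_{\ell,X_{n+1}}^{\Pn}(Y_{n+1}-)<\alpha\}$; taking $\Q$-probabilities in these inclusions and inserting the previous display delivers the two estimates of \eqref{eq:OoS} (with the larger bound controlling the $\le\alpha$ probability and the smaller, left-limited bound the $<\alpha$ probability, exactly as in Definition~\ref{def:probcal}). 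I would not expect a genuine obstacle anywhere: the whole content lies in the observation of the second and third paragraphs, that reinserting $Y_{n+1}$ as the assumed outcome turns the target out-of-sample probability integral transform into one coordinate of an exchangeable vector; the remaining points — legitimacy of invoking Definition~\ref{def:probcal_G} at size $n+1$, the quantifier interchange, joint measurability of $(\Pn,x,y)\mapsto G_x^{\Pn}(y)$, and the left-limit bookkeeping — are routine.
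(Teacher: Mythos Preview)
This paper does not actually prove Theorem~\ref{thm:intoout}; it is quoted from \cite{allen2025} and stated without argument, so there is no in-paper proof to compare against. Your strategy---plug the true outcome back in to form the oracle forecast $G_{X_{n+1}}^{\Pnp(Y_{n+1})}$, observe that it is sandwiched between $\Pi_\ell$ and $\Pi_u$ by construction, use in-sample calibration of $G$ at sample size $n+1$ together with exchangeability to show the oracle is probabilistically calibrated for $Y_{n+1}$ under $\Q$, and then transfer to the bounds by monotonicity---is the natural one and the reasoning is sound.

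There is, however, a mismatch between what your last paragraph establishes and what \eqref{eq:OoS} literally says. From $\Pi_\ell\le G_{X_{n+1}}^{\Pnp(Y_{n+1})}\le\Pi_u$ the sandwich only yields
\[
\Q\big(\Pi_{u,X_{n+1}}^{\Pn}(Y_{n+1})\le\alpha\big)\ \le\ \alpha\ \le\ \Q\big(\Pi_{\ell,X_{n+1}}^{\Pn}(Y_{n+1}-)<\alpha\big),
\]
with the \emph{upper} bound on the left and the \emph{lower} bound on the right; it cannot bound $\Q(\Pi_\ell(Y_{n+1})\le\alpha)$ from above or $\Q(\Pi_u(Y_{n+1}-)<\alpha)$ from below. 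Indeed, the display \eqref{eq:OoS} as printed is false in general: take $n=1$ and let $G_x^{\Pn}$ be the empirical distribution of the $y$-coordinates (which is in-sample calibrated); then $\Pi_{\ell,X_2}^{\hat{\mathbb P}_1}(Y_2)=\tfrac12\one\{Y_1\le Y_2\}$, and for i.i.d.\ continuous data $\Q(\Pi_\ell(Y_2)\le\alpha)=\tfrac12$ for every $\alpha<\tfrac12$. Your parenthetical remark (``the larger bound controlling the $\le\alpha$ probability and the smaller, left-limited bound the $<\alpha$ probability'') shows you already have the correct pairing; make the corrected display explicit rather than describing your conclusion as ``the two estimates of \eqref{eq:OoS}''.
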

\bigskip
\begin{remark}
    Equation \eqref{eq:OoS} in Theorem \ref{thm:intoout} continues to hold almost surely when we replace the probability measure $\Q$ with the (random) empirical distribution $\Pnp$. This statement is stronger, and can be used to test the assumption of exchangeability \citep[Section 8]{vovk2022book}.
\end{remark}

The bounds at \eqref{eq:Gbands} therefore provide a general framework for constructing predictive systems based on existing, well-understood forecasting procedures. In a split conformal setting (see below), the classical conformal prediction framework corresponds to a particular choice of forecasting procedure $G$, provided in the following section. More generally, predictive systems with out-of-sample calibration guarantees can be obtained from any in-sample calibrated forecasting procedure. \cite{allen2025} use this to introduce predictive systems based on binning prediction methods and isotonic distributional regression \citep[IDR;][]{HenziEtAl2021}, for example. A further in-sample calibrated forecasting procedure is given in \citet[Example 1]{allen2025} but not studied in detail. This procedure is the main focus of this paper (see Section \ref{sec:rdps}).
\\
\begin{remark}[Split conformal prediction]
    In practice, calculating the lower and upper bounds of the predictive system at \eqref{eq:Gbands} can be prohibitively computationally expensive, since this may require refitting the forecasting procedure for every possible new outcome $y' \in \R$. Instead, if sufficient data is available, a more efficient approach is to divide the training data $(x_1, y_1), \dots, (x_n, y_n)$ into an \emph{estimation set} $(x_1, y_1), \dots, (x_N, y_N)$ and a \emph{calibration set} $(x_{N+1}, y_{N+1}), \dots, (x_n, y_n)$, for $N < n$, and to estimate some parameters of the forecasting procedure on the estimation set. It then often becomes clear for which $y'$ the bounds of the predictive system at \eqref{eq:Gbands} are attained, and they can be obtained easily using the calibration data. In this case, the model parameters would not need to be relearned for every possible $y' \in \R$, substantially simplifying the calculation of the predictive system. This approach is often referred to as a \emph{split conformal} framework, in contrast to the \emph{full conformal} framework previously described \citep{vovk2018b}. The results of Theorem \ref{thm:intoout} hold in both the split and conformal settings \citep{allen2025}, and, unless specified otherwise, we adopt the notation of the full conformal setting throughout.
\end{remark}

\section{Conformal Predictive Systems}\label{sec:cmps}

\cite{vovk2017} initially defined conformal predictive systems using conformity measures. A conformity measure is a real-valued function $A$ that quantifies how much a prospective covariate-outcome pair conforms with the training data. The conformity measure is often of the form 
\begin{equation}\label{eq:conf}
    A(\Pn, (x, y)) := y - \hat{y}_x,
\end{equation}
for $(x, y) \in \Xx \times \R$, where $\hat{y}_x \in \R$ is a prediction for $y$ computed from $x$ and the training data $(x_1, y_1), \dots, (x_n, y_n)$. The prediction could be obtained using ordinary least squares regression, for example, or more flexible alternatives. A predictive system corresponding to $A$ is defined by the bounds
\begin{equation}\label{eq:CM_low}
    \Pi_{\ell, x_{n+1}}^{\Pn}(y) = \frac{1}{n + 1} \sum_{i = 1}^n \one\{A(\Pnp(y),(x_i, y_i)) < A(\Pnp(y),(x_{n+1}, y))\},
\end{equation}
\begin{equation}\label{eq:CM_upp}
    \Pi_{u, x_{n+1}}^{\Pn}(y) = \frac{1}{n + 1} \left( 1 + \sum_{i = 1}^n \one\{A(\Pnp(y),(x_i, y_i)) \leq A(\Pnp(y),(x_{n+1}, y)) \} \right).
\end{equation}
We refer to these in the following as \emph{Conformal Predictive Systems}.

In the full conformal setting, it is sufficient to define the conformity measure $A(\Pn, (x, y))$ only at pairs $(x, y)$ that are in the support of $\Pn$ (see \eqref{eq:CM_low} and \eqref{eq:CM_upp}). However, calculating the lower and upper bounds of the predictive system is generally computationally expensive and often practically infeasible; for example, for a conformity measure of the form \eqref{eq:conf}, it requires fitting a regression model to $(x_1, y_1), \dots ,(x_n, y_n), (x_{n+1}, y)$ for all possible values of $y \in \R$. In the split conformal setting, $\Pnp(y)$ in \eqref{eq:CM_low} and \eqref{eq:CM_upp} is replaced with $\hat{\mathbb{P}}_{N}$, the empirical distribution of the data in the estimation set, $(x_1, y_1), \dots, (x_N, y_N)$. Since this does not depend on $y$, the lower and upper bounds of the prediction system are generally much easier to calculate. In this case, the conformal predictive system can also be expressed as the predictive system generated using the bounds at \eqref{eq:Gbands} with the forecasting procedure
\begin{equation}\label{eq:CM_G}
    G_x^{\Pn}(y) = \frac{1}{n - N} \sum_{i=N+1}^{n} \one\{A(\PN, (x_i, y_i)) \le A(\PN, (x, y))\},
\end{equation}
where $\PN$ is the empirical distribution of the estimation data; this empirical distribution $\PN$ does not change when $\Pn$ is replaced with $\Pnp(y)$ in \eqref{eq:Gbands}. This representation is useful to study the connection with the Residual Distribution Predictive Systems introduced in Section \ref{sec:rdps}.

The forecasting procedure at \eqref{eq:CM_G} outputs valid predictive distributions when $A(\PN,(x_{n+1},y))$ is increasing in $y$. In this case, the forecasting procedure is in-sample probabilistically calibrated. Hence, conformal predictive systems are guaranteed to contain an out-of-sample probabilistically calibrated forecast distribution in the split conformal framework. This is also true in the full conformal framework, but only when a stronger constraint on the conformity measure $A$ is satisfied. Importantly, for certain choices of $A$, the lower and upper bounds at \eqref{eq:CM_low} and \eqref{eq:CM_upp} may not be increasing functions of $y$, which violates the definition of a predictive system as a set of predictive distribution functions. The following lemma provides a sufficient condition on the conformity measure $A$ to ensure that the bounds are increasing functions of $y$ \citep{vovk2022book}.
\\
\begin{lemma}
    Suppose that, for any $(x_1, y_1), \dots, (x_n, y_n) \in \Xx \times \R$ and any $x_{n+1} \in \mathcal{X}$, the function
    \begin{equation}\label{eq:mono}
        y \mapsto A(\Pnp(y), (x_{n+1}, y)) - A(\Pnp(y), (x_i, y_i))
    \end{equation}
	is increasing, for all $i = 1, \dots, n$. Then, when $\Pi_{\ell,x_{n+1}}^{\Pn}$ and $\Pi_{u,x_{n+1}}^{\Pn}$ are defined using \eqref{eq:CM_low} and \eqref{eq:CM_upp}, the functions $y \mapsto \Pi_{\ell,x_{n+1}}^{\Pn}(y)$ and $y \mapsto \Pi_{u,x_{n+1}}^{\Pn}(y)$ are increasing.
\end{lemma}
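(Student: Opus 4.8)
The plan is to reduce the claimed monotonicity of both bounds to the monotonicity hypothesis on the single function in \eqref{eq:mono}. Fix the training data $(x_1,y_1),\dots,(x_n,y_n)$ and the covariate $x_{n+1}$, and for each $i = 1,\dots,n$ abbreviate the quantity in \eqref{eq:mono} as
\[
D_i(y) := A(\Pnp(y),(x_{n+1},y)) - A(\Pnp(y),(x_i,y_i)),
\]
which is an increasing function of $y$ by assumption. The key observation is that the summands appearing in \eqref{eq:CM_low} and \eqref{eq:CM_upp} are exactly indicator functions of $D_i$: one has $\one\{A(\Pnp(y),(x_i,y_i)) < A(\Pnp(y),(x_{n+1},y))\} = \one\{D_i(y) > 0\}$ and $\one\{A(\Pnp(y),(x_i,y_i)) \le A(\Pnp(y),(x_{n+1},y))\} = \one\{D_i(y) \ge 0\}$.

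Next I would invoke the elementary fact that if $f : \R \to \R$ is increasing, then $y \mapsto \one\{f(y) > 0\}$ and $y \mapsto \one\{f(y) \ge 0\}$ are both increasing: if $y \le y'$ then $f(y) \le f(y')$, so $f(y) > 0$ forces $f(y') > 0$, and likewise with $\ge$ in place of $>$. Applying this with $f = D_i$ shows that each summand in \eqref{eq:CM_low} and \eqref{eq:CM_upp} is an increasing function of $y$. A finite sum of increasing functions is increasing, and multiplying by the positive constant $1/(n+1)$, as well as adding the constant $1/(n+1)$ in the case of the upper bound, preserves monotonicity; hence $y \mapsto \Pi_{\ell,x_{n+1}}^{\Pn}(y)$ and $y \mapsto \Pi_{u,x_{n+1}}^{\Pn}(y)$ are increasing, as claimed.

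I do not anticipate a genuine obstacle here — the argument is essentially bookkeeping once \eqref{eq:mono} is introduced. The only point that warrants care is the distinction between the strict inequality in the lower bound ($\{D_i > 0\}$) and the weak inequality in the upper bound ($\{D_i \ge 0\}$), handled uniformly by the two cases of the elementary fact above, together with the convention that ``increasing'' is meant in the non-strict sense, so that each $D_i$ being weakly increasing is enough for both indicator functions to be weakly increasing.
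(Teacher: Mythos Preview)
Your proposal is correct and follows essentially the same approach as the paper: the paper simply notes that the monotonicity condition makes each indicator $\one\{A(\Pnp(y),(x_i,y_i)) \le A(\Pnp(y),(x_{n+1},y))\}$ an increasing function of $y$, so the sums in \eqref{eq:CM_low} and \eqref{eq:CM_upp} are increasing. Your write-up just makes this explicit via the abbreviation $D_i$ and the elementary fact about indicators of increasing functions, with the appropriate care for the strict versus weak inequality.
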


The requirement at \eqref{eq:mono} is referred to as the \emph{monotonicity condition}. If this is satisfied, then the indicator $\one\{A(\Pnp(y),(x_i, y_i)) \leq A(\Pnp(y),(x_{n+1}, y)) \}$ is an increasing function of $y$, and hence the lower and upper bounds of the predictive system at \eqref{eq:CM_low} and \eqref{eq:CM_upp} must also be increasing in $y$. The thickness of the resulting predictive system is then at most $1/(n+1)$.
\\
\begin{remark}\label{rem:CM_mono}
    The monotonicity condition is most relevant in the full conformal setting. In the split conformal framework, the first argument of the conformity measure is fixed at $\PN$. Hence, for conformity measures of the form \eqref{eq:conf}, the parameters of the regression model would be obtained from the estimation data set, in which case \eqref{eq:mono} simplifies to $y \mapsto y - \hat{y}_{n+1} - y_i + \hat{y}_i$ (with $\hat{y}_i = \hat{y}_{x_i}$ for $i = 1, \dots, n+1$). The latter three terms are independent of $y$, so this is trivially an increasing function of $y$. Any conformity measure in this form therefore immediately yields valid predictive distributions and predictive systems in the split conformal setting. The same is not true in the full conformal setting, as elucidated by the following example.
\end{remark}
\bigskip
\begin{example}[Least Squares Prediction Machine]\label{ex:lspm}
    Consider the conformity measure at \eqref{eq:conf} where $\hat{y}_x$ is the least squares prediction for $y$, trained using $(x_1, y_1), \dots, (x_n, y_n)$ and evaluated at $x$. \cite{vovk2017} term the corresponding predictive system the \emph{Least Squares Prediction Machine} (LSPM). However, according to \citet[Proposition 6]{vovk2017}, this conformity measure does not always satisfy the monotonicity condition at \eqref{eq:mono}. Instead, it is common to replace \eqref{eq:conf} with the studentised conformity measure 
    \begin{equation}\label{eq:conf_stu}
        A(\Pn,(x, y)) := \frac{y - \hat{y}_x}{\sqrt{1 - h_{(x,y)}}},
    \end{equation}
    where $h_{(x,y)}$ is the leverage of $(x, y)$ in the regression fit. It is not restrictive to assume that this leverage term exists, since the conformal predictive systems in the full conformal setting only require evaluating the conformity measure at a covariate-outcome pair that is in the support of the first argument of $A$. In the split conformal setting, this studentised conformity measure cannot be used, since the conformity measure must be evaluated at pairs $(x, y)$ that are not in the support. However, from Remark \ref{rem:CM_mono}, the standard non-normalised conformity measure at \eqref{eq:conf} can be employed in this case without issue. The studentised conformity measure at \eqref{eq:conf_stu} satisfies the monotonicity condition, unlike its non-normalised counterpart at \eqref{eq:conf}, and using this within \eqref{eq:CM_low} and \eqref{eq:CM_upp} yields the \emph{studentised LSPM} of \cite{vovk2017} \citep[see also][Section 7.3]{vovk2022book}. Similar results holds for kernel regression \citep{vovk2018a}.
\end{example}

The studentisation of the LSPM is reminiscent of a strategy to \emph{localise} conformal predictive systems \citep{PapadopoulosEtAl2008,LeiEtAl2018}. The conformity measure at \eqref{eq:conf} is often adapted to use a standardised residual, where the residual is divided by an estimate of the uncertainty in the prediction that is similarly obtained from $\Pn$. In doing so, the conformity measure can generally better adapt to regions of the outcome space where the performance of the prediction is less reliable. More generally, we could apply any strictly increasing transformation $\hat{f}_x$ to the residuals, $A(\Pn, (x, y)) = \hat{f}_x(y - \hat{y}_x)$, where $\hat{f}_x$ is obtained from $\Pn$ and $x$. However, in general, there is still no guarantee that these local and transformed conformity measures satisfy the monotonicity condition at \eqref{eq:mono}, and details would therefore have to be worked out for specific cases. In the next section, we introduce predictive systems that coincide with conformal predictive systems in the split conformal setting, and that are more intuitive in the full conformal setting since they naturally avoid the monotonicity condition.

\section{Residual Distribution Predictive Systems}\label{sec:rdps}

\subsection{Definition}

Theorem \ref{thm:intoout} holds for any in-sample calibrated forecasting procedure. An open question is what forecasting procedures are in-sample calibrated. We study the following approach.
\\
\begin{definition}
    Let $\hat{y}_i \in \R$ denote a point-prediction for $y_i$, based on $x_i$ and the training data $\Pn$, for $i = 1, \dots, n$, and denote by $\hat{\varepsilon}_i = y_i - \hat{y}_i$ the corresponding residuals. Let $\hat{y}_x \in \R$ similarly denote a point-prediction for a new outcome, obtained from $\Pn$ and a covariate $x \in \Xx$. The \emph{Residual Distribution} forecasting procedure is defined as
	\begin{equation}\label{eq:RD_G}
	    G_x^{\Pn}(y) = \frac{1}{n}\sum_{i=1}^n \one\{\hat{y}_x + \hat{\varepsilon}_i \leq y\}, \quad y \in \mathbb{R}.
	\end{equation}
    The \emph{Residual Distribution Predictive System} (RDPS) is the predictive system $\Pi$ generated by this forecasting procedure using the bounds defined at \eqref{eq:Gbands}.
\end{definition}

The forecasting procedure at \eqref{eq:RD_G} fits a regression model to the training data, and calculates the in-sample residuals. It then takes an out-of-sample prediction for the new outcome, and dresses this prediction using the training residuals to obtain a discrete predictive distribution. If the regression model yields inaccurate forecasts, then the residuals will generally be large, leading to a highly dispersed distribution, whereas more accurate forecasts will lead to more concentrated predictive distributions. An illustration of this using simulated data is provided in Figure \ref{fig:simdata}.

Just as residuals can be replaced with standardised residuals to obtain a locally adaptive conformity measure, the RDPS can similarly be defined using standardised residuals. More generally, we can again apply any strictly increasing transformation to the residuals.
\\
\begin{definition}
    Let $\hat{y}_i \in \R$ denote a prediction for $y_i$, and let $\hat{f}_i : \R \to \R$ be a strictly increasing function, based on $x_i$ and the training data $\Pn$, for $i = 1, \dots, n$. Denote by $\hat{\varepsilon}_i = y_i - \hat{y}_i$ the corresponding residuals. Let $\hat{y}_x \in \R$ similarly denote a point-prediction for a new outcome, and let $\hat{f}_x : \R \to \R$ be a strictly increasing function, both obtained from $\Pn$ and a covariate $x \in \Xx$. The \emph{generalised Residual Distribution} forecasting procedure is defined as
	\begin{equation}\label{eq:gen_RDPS}
    G_x^{\Pn}(y) = \frac{1}{n}\sum_{i=1}^n \one \left\{ \hat{y}_x + \hat{f}_x^{-1}(\hat{f}_i(\hat{\varepsilon}_i)) \leq y \right\}, \quad y \in \mathbb{R}.  
	\end{equation}
    The predictive system generated by \eqref{eq:gen_RDPS} is referred to as the \emph{generalised RDPS}.
\end{definition}

This generalised RDPS similarly calculates the in-sample residuals of a regression model, but transforms them according to the functions $\hat{f}_i$ before dressing the new prediction $\hat{y}_x$. The motivation behind this approach is that different residuals in the training data set can be rescaled depending on their relevance to the new prediction. The functions $\hat{f}_i$ therefore help to incorporate context-dependent information into the predictive distribution and, in turn, the predictive system.
\\
\begin{example}\label{ex:std_RDPS}
    Let $\hat{f}_i(t) = t/\hat{\sigma}_i$, where $\hat{\sigma}_i > 0$ is a measure of uncertainty in the prediction $\hat{y}_i$, for $t \in \R$ and $i = 1, \dots, n$. Let $\hat{y}_x$ be a prediction for a new outcome, based on the covariate $x$, and let $\hat{f}_x(t) = t/\hat{\sigma}_x$, where $\hat{\sigma}_x >0$ is similarly a measure of uncertainty in $\hat{y}_x$. The generalised RDPS at \eqref{eq:gen_RDPS} becomes 
	\[
    G_x^{\Pn}(y) = \frac{1}{n}\sum_{i=1}^n \one \left\{ \hat{y}_x + \frac{\hat{\sigma}_x}{\hat{\sigma}_i} \hat{\varepsilon}_i \leq y \right\}, \quad y \in \mathbb{R}.
    \]
    This corresponds to an approach whereby the residuals are standardised by a measure of uncertainty in the prediction, before being added to the new prediction $\hat{y}_x$.
\end{example}

\citet[Example 3]{allen2025} demonstrate that the forecasting procedure at \eqref{eq:RD_G} is in-sample probabilistically calibrated, and this extends easily to the generalised forecasting procedure at \eqref{eq:gen_RDPS}. The resulting RDPS is therefore guaranteed to contain an out-of-sample calibrated forecast distribution for $Y_{n+1}$ (under the assumption of exchangeability). Hence, this approach shares the same calibration properties as the classic conformal predictive systems. The RDPS has the advantage that it generates valid predictive distributions and predictive systems for any choice of the regression method; it does not require that the fairly stringent monotonicity condition at \eqref{eq:mono} is satisfied, for example. This facilitates the introduction of predictive systems with out-of-sample calibration guarantees using any regression procedure. However, for some choices of the regression method in the full conformal setting, the thickness of the RDPS will be excessively high, in which case the predictive systems are not particularly useful in practice. In contrast, conformal predictive systems have a thickness of at most $1/(n+1)$.

\subsection{Comparison with Conformal Predictive Systems}\label{sec:comp}

Consider first the split conformal setting. In this case, the following theorem demonstrates that Residual Distribution Predictive Systems are actually equivalent to conformal predictive systems with conformity measure of the form \eqref{eq:conf} when both are constructed using the same regression model. This equivalence holds more generally for the generalised RDPS and the conformal predictive systems defined using the transformed residuals. 

\begin{figure}
    \centering
    \includegraphics[width=0.32\linewidth]{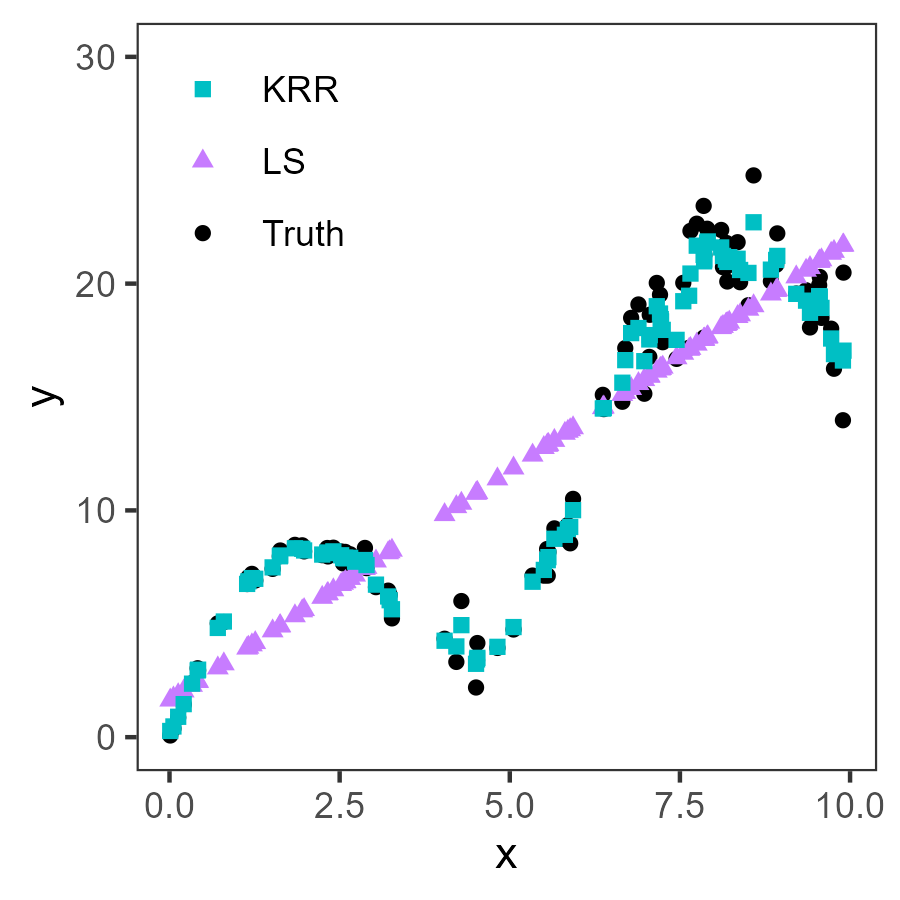}
    \includegraphics[width=0.32\linewidth]{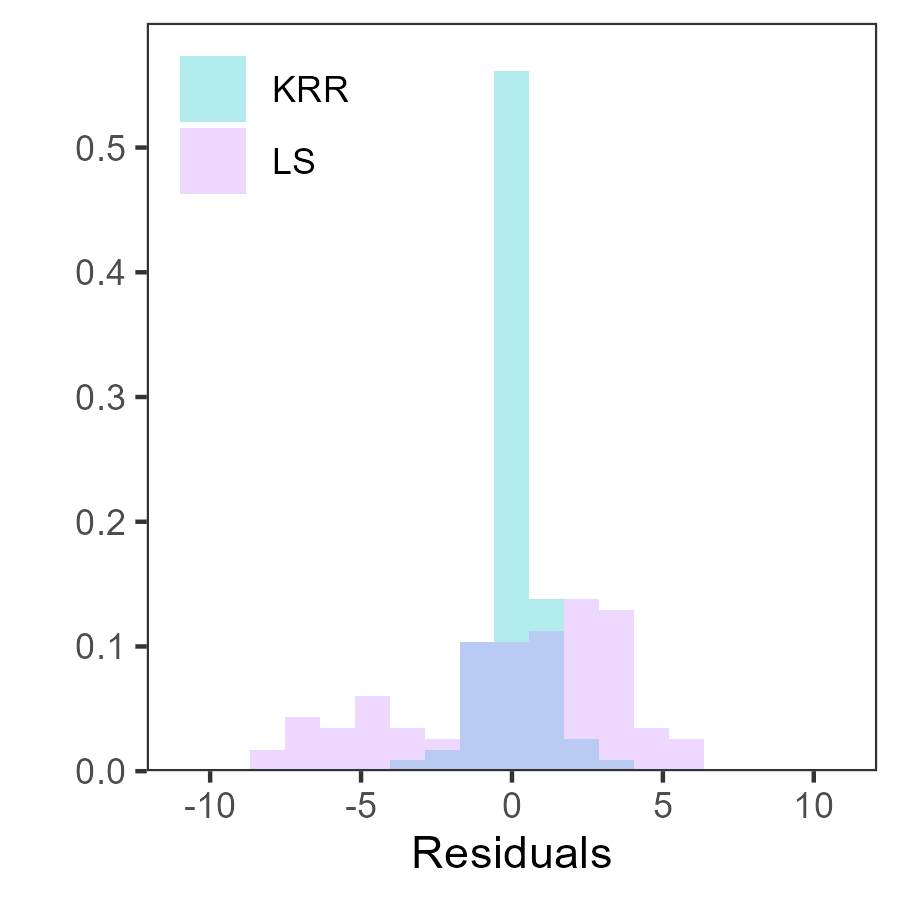}
    \includegraphics[width=0.32\linewidth]{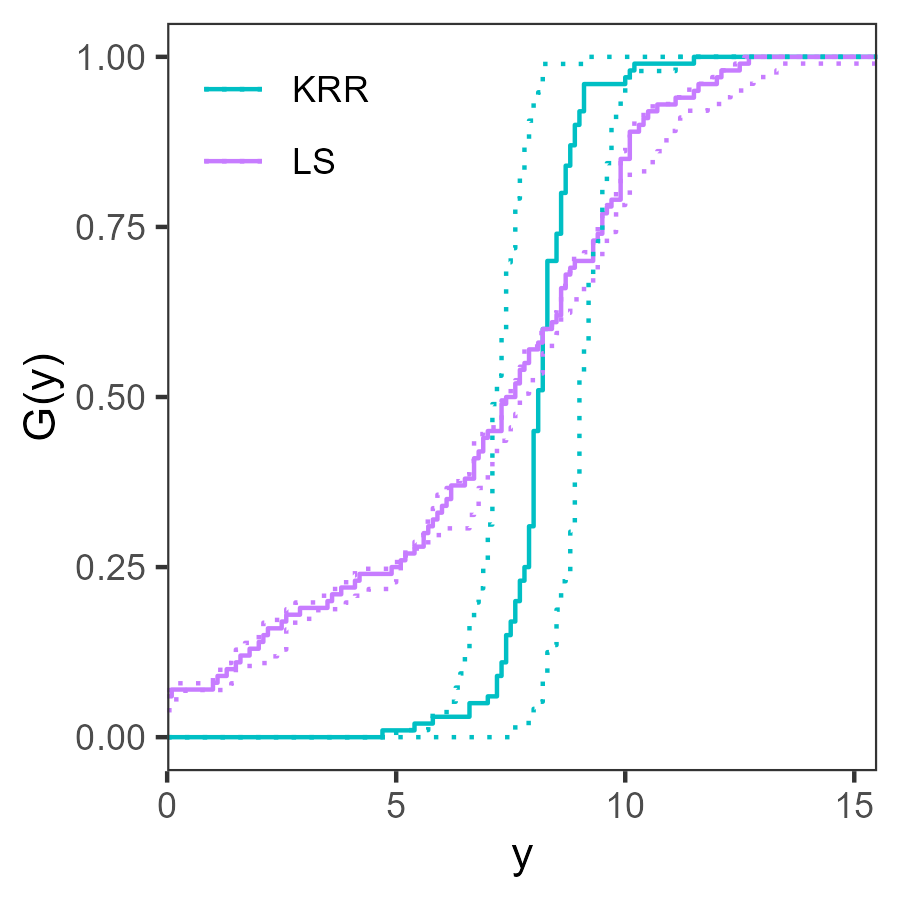}
    \caption{Left: 100 samples from the nonlinear simulation data in Section \ref{sec:simstudy}, along with the point forecasts obtained from ordinary least squares regression (LS) and kernel ridge regression (KRR). Centre: A histogram of the corresponding residuals. Right: The resulting predictive distributions $G_{x_{n+1}}^{\Pn}$ (solid) and RDPS bounds $\Pi_{\ell, x_{n+1}}^{\Pn}, \Pi_{u, x_{n+1}}^{\Pn}$ (dotted) at $x_{n+1} = 2.5$. Least squares prediction yields less accurate forecasts for the nonlinear data than kernel ridge regression, resulting in higher residuals and a more dispersed predictive distribution. However, the RDPS thickness is larger for kernel ridge regression since the model is more flexible and therefore more difficult to estimate.}
    \label{fig:simdata}
\end{figure}

\begin{theorem}\label{thm:equiv}
    Suppose that, for any $x \in \Xx$, we have a prediction $\hat{y}_x \in \R$ and a strictly increasing function $\hat{f}_x : \R \to \R$ based on $\PN$. Define a conformity measure by $A(\PN, (x, y)) = \hat{f}_x(y - \hat{y}_x)$. Then, in the split conformal setting, the conformal predictive system defined using this conformity measure is equivalent to the generalised RDPS defined by \eqref{eq:gen_RDPS}.
\end{theorem}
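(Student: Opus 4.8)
The plan is to reduce the claimed equivalence to the observation that, in the split conformal setting, both predictive systems are obtained by feeding a single forecasting procedure into the band construction \eqref{eq:Gbands}, and that the two relevant forecasting procedures in fact coincide pointwise. Recall from Section~\ref{sec:cmps} that the split conformal predictive system with conformity measure $A$ equals the predictive system generated via \eqref{eq:Gbands} by the forecasting procedure \eqref{eq:CM_G}, $G_x^{\Pn}(y) = \frac{1}{n-N}\sum_{i=N+1}^{n} \one\{A(\PN,(x_i,y_i)) \le A(\PN,(x,y))\}$. On the other side, the generalised RDPS is, by definition, the predictive system generated via \eqref{eq:Gbands} by the forecasting procedure \eqref{eq:gen_RDPS}; in the split setting the predictions $\hat y_i = \hat y_{x_i}$ and the transforms $\hat f_i = \hat f_{x_i}$ ($i = N+1, \dots, n$), as well as $\hat y_x$ and $\hat f_x$, are the $\PN$-based objects of the theorem, and the residual average runs over the calibration indices $i = N+1, \dots, n$. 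Since \eqref{eq:Gbands} is a fixed deterministic functional of its input forecasting procedure, it suffices to prove that these two forecasting procedures are equal as functions of $(x,y)$.

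This reduces to a change of variables exploiting strict monotonicity of $\hat f_x$. Fix $i \in \{N+1, \dots, n\}$ and $y \in \R$, and write $\hat\varepsilon_i = y_i - \hat y_{x_i}$. Because $\hat f_x$ is strictly increasing, its inverse is strictly increasing and $\hat f_x^{-1}(c) \le d \iff c \le \hat f_x(d)$; applying this with $c = \hat f_i(\hat\varepsilon_i)$ and $d = y - \hat y_x$ yields
\[
\one\bigl\{\hat y_x + \hat f_x^{-1}(\hat f_i(\hat\varepsilon_i)) \le y\bigr\} = \one\bigl\{\hat f_{x_i}(y_i - \hat y_{x_i}) \le \hat f_x(y - \hat y_x)\bigr\} = \one\bigl\{A(\PN,(x_i,y_i)) \le A(\PN,(x,y))\bigr\},
\]
where the final equality is merely the definition $A(\PN,(x,y)) = \hat f_x(y - \hat y_x)$. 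Averaging over $i = N+1, \dots, n$ identifies the forecasting procedure \eqref{eq:gen_RDPS} with \eqref{eq:CM_G}, so the two predictive systems, being the image of this common procedure under \eqref{eq:Gbands}, coincide. Taking $\hat f_x \equiv \mathrm{id}$ recovers the equivalence between the basic RDPS \eqref{eq:RD_G} and the conformal predictive system with conformity measure \eqref{eq:conf}.

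I do not expect a substantial obstacle; the content is essentially a single reindexing identity. The points that require care are bookkeeping rather than mathematical: (i) one must line up the calibration index set and normalising constant used in the two split-conformal definitions so that the two averages are over matching terms; (ii) one must track the direction of the inequality when passing through $\hat f_x^{-1}$ — it is strict monotonicity of $\hat f_x$, not mere monotonicity, that is used, and it is also what makes the inverse single-valued (should $\hat f_x$ fail to be surjective, one reads $\hat f_x^{-1}$ as its extension to $\overline{\R}$ so that $\hat f_x^{-1}\circ \hat f_i$ in \eqref{eq:gen_RDPS} is everywhere defined, and the displayed equivalence is unaffected); and (iii) one should note that augmenting the data with the hypothetical pair $(x_{n+1}, y')$ inside \eqref{eq:Gbands} leaves $\PN$, and hence all the $\PN$-based quantities $\hat y_x, \hat f_x, \hat y_{x_i}, \hat f_{x_i}, \hat\varepsilon_i$, unchanged, so that the band construction acts on the two sides in exactly the same way. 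With these conventions pinned down, the displayed identity completes the argument.
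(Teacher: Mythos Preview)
Your proposal is correct and follows essentially the same route as the paper: both arguments identify the split-conformal forecasting procedure \eqref{eq:CM_G} with the generalised residual-distribution procedure \eqref{eq:gen_RDPS} via the indicator identity $\one\{\hat y_x + \hat f_x^{-1}(\hat f_i(\hat\varepsilon_i)) \le y\} = \one\{A(\PN,(x_i,y_i)) \le A(\PN,(x,y))\}$, and then conclude that the common image under \eqref{eq:Gbands} yields identical predictive systems. Your version is slightly more explicit about the role of strict monotonicity and the bookkeeping points (calibration index set, invariance of $\PN$ under augmentation), but the mathematical content is the same.
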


\begin{proof}
    Denote $\hat{f}_i = \hat{f}_{x_i}$, $\hat{y}_i = \hat{y}_{x_i}$, and $\hat{\varepsilon}_i = y_i - \hat{y}_i$, for $i = N+1, \dots, n + 1$. In the split conformal setting, the forecasting procedure at \eqref{eq:gen_RDPS} becomes   
    \begin{align*}
    G_{x_{n+1}}^{\Pn}(y) &= \frac{1}{n - N} \sum_{i = N+1}^{n} \one\{\hat{y}_{n+1} + \hat{f}_{n+1}^{-1}(\hat{f}_i(\hat{\varepsilon}_i)) \leq y\}, \quad y \in \R \\
    &= \frac{1}{n - N} \sum_{i = N+1}^{n} \one\{\hat{f}_i(y_i - \hat{y}_i)) \leq \hat{f}_{n+1}(y - \hat{y}_{n+1})\}, \quad y \in \R.
    \end{align*}
    This is exactly the forecasting procedure at \eqref{eq:CM_G} with $A(\PN, (x, y)) = \hat{f}_x(y - \hat{y}_x)$. Since the residual distribution and conformity measure predictive systems both correspond to the bounds at \eqref{eq:Gbands} defined using the same forecasting procedure, the predictive systems themselves must also be equivalent.
\end{proof}

Theorem \ref{thm:equiv} holds for any regression method used to obtain the predictions, and any choice of the strictly increasing functions. The forecasting procedure underlying the RDPS at \eqref{eq:CM_G} is obtained when $\hat{f}_x$ is the identity function for all $x \in \Xx$, while the standardised RDPS in Example \ref{ex:std_RDPS} is recovered using the functions $\hat{f}_x(t) = t/\hat{\sigma}_x$ for $t \in \R$. 

One immediate corollary of Theorem \ref{thm:equiv} is that the thickness of the RDPS in the split conformal framework is equal to $1/(n - N + 1)$, where the denominator is simply the size of the calibration dataset plus one. This thickness is constant and does not depend on $y$, in contrast to the full conformal framework, where the thickness of the RDPS varies with $y$ (see Section \ref{sec:simstudy}). Hence, while the conformal and Residual Distribution predictive systems are equivalent in the split conformal setting, they differ when a full conformal framework is adopted. One benefit of the RDPS is that they do not require that any monotonicity condition is satisfied, as in \eqref{eq:mono}. In principle, this allows the approach to be implemented with any regression method. 

However, for some regression methods, the bounds at \eqref{eq:Gbands} will be too wide to be informative. We can make the following approximation to find a sufficient criterion for informative bounds. Let $\hat{y}_i'$ denote the prediction for $y_i$ when the regression method is trained using $\Pnp(y')$, for $y' \in \R$. The superscript prime in $\hat{y}_i'$ is used to clarify that this is a function of $y'$. Then, 
\begin{align*}
    \Pi^{\Pn}_{u,x_{n+1}}(y) & = \sup_{y' \in \R} G_{x_{n+1}}^{\Pnp(y')}(y) \\
    &= \sup_{y' \in \R} \frac{1}{n+1}\left(\sum_{i=1}^{n}\one\{\hat{y}_{n+1}' + y_i - \hat{y}_i' \le y\} + \one\{\hat{y}_{n+1}' + y' - \hat{y}_{n+1}' \le y\}\right)\\
     &= \sup_{y' \in \R} \frac{1}{n+1}\left(\sum_{i=1}^{n}\one\{\hat{y}_{n+1}' - \hat{y}_i' \le y - y_i\} + \one\{y' \le y \}\right)\\
    &\le  \frac{1}{n+1}\left(\sum_{i=1}^{n}\one\{\inf_{y' \in \R}(\hat{y}_{n+1}' - \hat{y}_i') \le y - y_i\} + 1\right).
\end{align*}
Similarly, 
\[
    \Pi^{\Pn}_{\ell,x_{n+1}}(y) = \inf_{y' \in \R} G_{x_{n+1}}^{\Pnp(y')}(y) \ge  \frac{1}{n+1}\sum_{i=1}^{n}\one\{\sup_{y' \in \R}(\hat{y}_{n+1}' - \hat{y}_i') \le y - y_i\}.
\]

Hence, for the RDPS to have reasonable thickness, we want that the regression method is such that $\inf_{y' \in \R}(\hat{y}_{n+1}' - \hat{y}_i')$ and $\sup_{y' \in \R}(\hat{y}_{n+1}' - \hat{y}_i')$ are bounded. For example, when using quantile regression to determine $\hat{y}_i'$, we have the property that the regression line, or the regression hyperplane more generally for $p$-dimensional covariates, passes through at least $p$ data points if it is unique. When $y'$ is extreme, it is highly unlikely that the regression hyperplane would pass through $(x_{n+1},y')$, so the difference $\hat{y}_{n+1}' - \hat{y}_i'$ is robust to outlying values of $y'$. This ensures informative predictive bounds. In contrast, for ordinary least squares regression or kernel regression, the difference $\hat{y}_{n+1}' - \hat{y}_i'$ is typically linear in $y'$, so the bounds will typically be uninformative unless the range of $y_i$ is bounded.
\\
\begin{remark}[Deleted RDPS]\label{rmk:delete}
    One approach to obtain robust regression methods for this purpose is to simply identify and remove outliers from the augmented training data. This shares similarities with the \emph{deleted} LSPM proposed by \cite{vovk2017}, though here we cannot simply remove the pair $(x_{n+1}, y)$ since Theorem \ref{thm:intoout} requires that the parameter estimation procedure is permutation invariant. Instead, one could fit the regression model to the training data, and then remove the covariate-outcome pairs that yield large (absolute) residuals. The regression model can then be refit to the remaining training data. This increases the computational cost, but circumvents the sensitivity of the RDPS to outlying values. This approach is implemented in Section \ref{sec:simstudy} when calculating the residuals using least squares regression and kernel ridge regression. Several other approaches exist to construct robust regression models \citep[e.g.][]{SalibianBarrera2023}, and they could similarly be employed.
\end{remark}

\begin{figure}[t]
	\centering
    \includegraphics[width=0.5\textwidth]{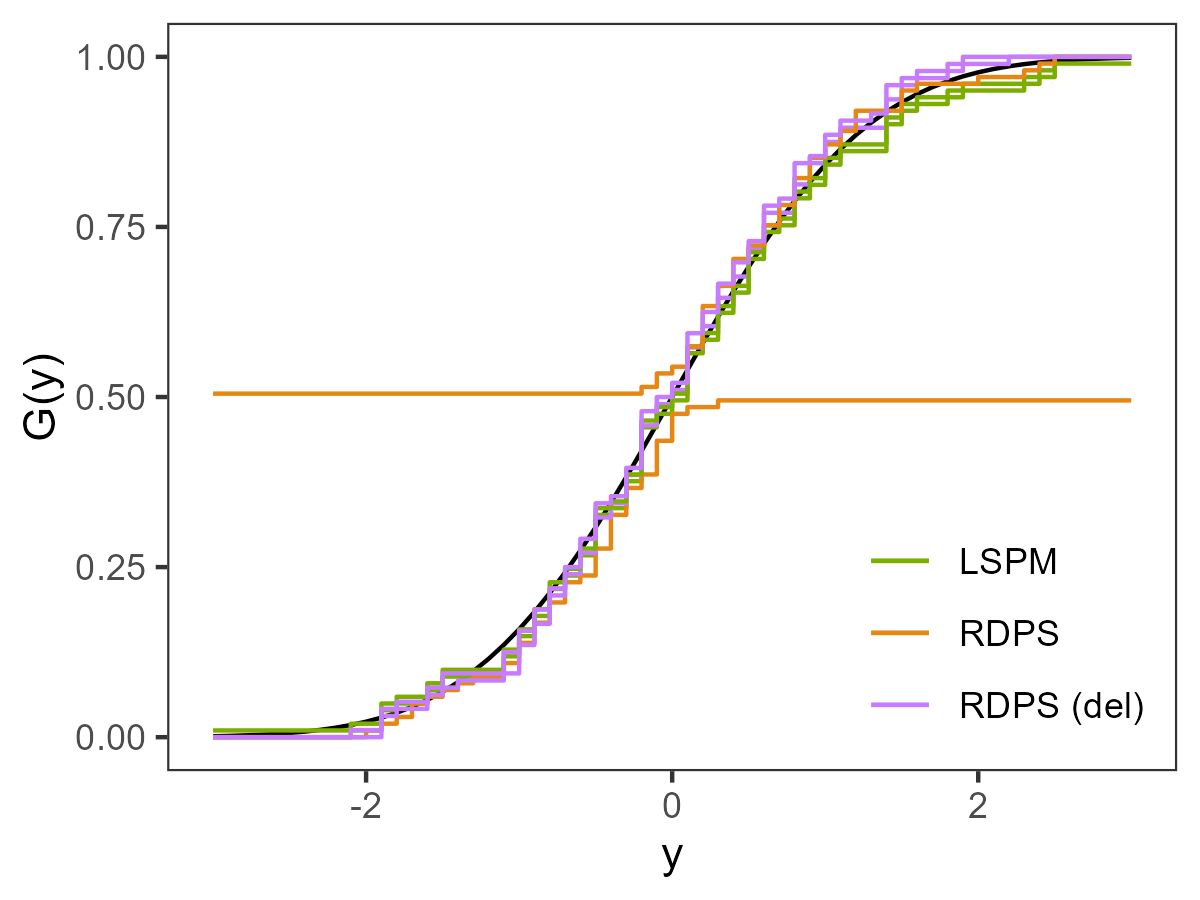}
	\caption{Lower and upper bounds of the LSPM, RDPS, and the deleted RDPS (see Remark \ref{rmk:delete}) at a given $x_{n+1}$ for the linear simulation data in Section \ref{sec:simstudy} Both RDPS models are obtained using ordinary least squares regression. The black line is the true conditional distribution of the outcome at this covariate value.}
	\label{fig:ss_full}
\end{figure}

\subsection{Computation}

Consider now the computation of the RDPS. The bounds at \eqref{eq:Gbands} correspond to infimum and supremum of the predictive distributions obtained for all possible $y' \in \R$. In practice, these could be calculated by discretising the real line, refitting the forecasting procedure for all $y'$ on this discretised line, and then taking the pointwise minimum and maximum of the resulting predictive distributions. However, this approach is generally prohibitively computationally expensive, and the results may also depend on how the real line is discretised, particularly in the limits. Moreover, in practice, it is often not necessary to consider every possible value of $y'$. 

For the standard RDPS, we have that
\begin{equation}\label{eq:RDPS_G_full}
    G_{x_{n+1}}^{\Pnp(y')}(y) = \frac{1}{n + 1} \left[ \sum_{i=1}^n \one\{\hat{y}'_{n+1} - \hat{y}'_i \leq y - y_i\} + \one\{y' \le y\} \right]. 
\end{equation}
In the split conformal setting, the predictions $\hat{y}'_{n+1}$ and $\hat{y}'_i$ do not depend on $y'$, so the lower and upper bounds can be calculated by setting $\one\{y' \le y\}$ to zero and one, respectively. Similarly, in the full conformal setting, if $\hat{y}'_{n+1} - \hat{y}'_i$ is an increasing function of $y'$, for all $i = 1, \dots, n$, then \eqref{eq:RDPS_G_full} is a decreasing function of $y'$. In this case, the lower and upper bounds of the predictive system can be computed efficiently using just two fits of the forecasting procedure: the infimum is obtained as $y' \to +\infty$, and the supremum as $y' \to -\infty$. Some examples where this is (and is not) satisfied are given in Appendix \ref{app:comp}.

Even if $\hat{y}'_{n+1} - \hat{y}'_i$ is not increasing in $y'$, it may still be possible to efficiently implement the RDPS. This is possible, for example, when the predictions are linear functions of the observations in the training data. Some examples where this is the case include least squares regression, kernel regression, smoothing splines, $k$-nearest neighbours, and binning procedures. \cite{vovk2017} and \cite{vovk2018a} use this to derive efficient algorithms to calculate the LSPM and its kernel regression counterpart, the Kernel Ridge Regression Prediction Machine (KRRPM). Details to calculate the RDPS in this case are described in Appendix \ref{app:comp}. Alternatively, other estimation methods could be leveraged to obtain more efficient implementations of the RDPS. We provide an example below for quantile regression; details for other methods would have to be similarly worked out.
\\
\begin{example}[Quantile regression] 

    A robust alternative to ordinary least squares regression is median regression (more generally, quantile regression) \citep{KoenkerBassett1978}. \cite{ShenEtAl2024} recently proposed an iterative algorithm for parameter estimation in online quantile regression. We modify this slightly to obtain an approach that allows for efficient estimation of the RDPS bounds when the predictions are obtained using quantile regression. Firstly, we randomise the order of the training pairs $(x_1, y_1), \dots, (x_n, y_n), (x_{n+1}, y')$, and let $(x_{(i)}, y_{(i)})$ denote the $i$-th pair in the randomised sequence, for $i = 1, \dots, n+1$. Following \cite{ShenEtAl2024}, an initial parameter estimate $\beta_{(0)}$ is then iteratively updated using sub-gradient descent, with
    \[
    \beta_{(t+1)} = \beta_{(t)} - \eta_{t+1} \left( \tau - \one \big\{ y_{(t+1)} > \beta_{(t)}^{\top} x_{(t+1)} \big\} \right) x_{(t+1)}, \quad t = 0, \dots, n,
    \]
    where $\eta_1, \dots, \eta_{n+1} > 0$ are stepsize parameters, and $\tau \in (0, 1)$ is the quantile level of interest. We assume that the parameters $\beta_{(0)}$ and $\eta_1, \dots, \eta_{n+1}$ do not depend on $y'$. We then take the final predictions as $\hat{y}_i = \beta_{(n+1)}^{\top}x_{i}$, for $i = 1, \dots, n + 1$. All predictions are therefore derived from the final parameter in the algorithm, $\beta_{(n+1)}$. Assume that $(x_{n+1}, y') = (x_{(j)}, y_{(j)})$, i.e. $(x_{n+1}, y')$ is the $j$-th training pair in the randomised sequence for $j \in \{1, \dots, n+1\}$. Then, the parameter $\beta_{(n+1)}$, and hence the predictions $\hat{y}_i'$, only depend on $y'$ via the indicator $\one \big\{ y' > \beta_{(j-1)}^{\top} x_{n+1} \big\}$. Hence, over all possible $y' \in \R$, $\hat{y}_{n+1}' - \hat{y}_i'$ takes on only two possible values, meaning the infimum and supremum of the forecasting procedure at \eqref{eq:RDPS_G_full} can be obtained from just two runs of the iterative parameter estimation, which is anyway quick to implement. This can be extended further to run the iterative algorithm multiple times, and to average the resulting parameter estimates; while the RDPS is guaranteed to contain a calibrated probabilistic prediction for $Y_{n+1}$ for any regression model, better parameter estimates will result in more informative predictive systems.

\end{example}

\section{Simulation Examples}\label{sec:simstudy}

We compare the performance of the RDPS and conformal predictive systems when applied to two simple simulated datasets. The first case assumes that there is a linear relationship between the covariates and the outcome variable: $X_i \sim \mathcal{N}(0, 1)$, and $Y_i \sim \mathcal{N}(X_i, 1)$ for $ i = 1, \dots, n + 1$. The second case assumes that there is a nonlinear relationship between $X_i$ and $Y_i$, and that the conditional variance of the outcome is a function of the covariate: $X_i \sim \text{Uniform}(0, 10)$ and $Y_i = 2X_i + 5\sin(X_i) + \eta_i$, where $\eta_i \sim \mathcal{N}(0, (x_i/5)^2)$. In both cases, $(X_1,Y_1),\dots,(X_{n+1},Y_{n+1})$ are iid. A sample from the nonlinear simulated data is shown in Figure \ref{fig:simdata}.

In each case, we set $n = 100$, and draw realisations of $(X_1,Y_1),\dots,(X_{n+1},Y_{n+1})$. We fit a conformal predictive system and an RDPS to the training data and $X_{n+1}$, and assess the predictive system in its ability to predict $Y_{n+1}$. This is repeated 1000 times. The predictive systems are compared via the (conservative) prediction intervals derived from them. We evaluate the $1-\alpha$-level prediction intervals for $1-\alpha$ ranging from 0.5 to 0.95. The central $1-\alpha$-level prediction interval is obtained from the predictive system using the $\alpha/2$-quantile of $\Pi_u$ and the $1 - \alpha/2$-quantile of $\Pi_\ell$. While these bounds may be defective distribution functions, these quantiles exist in all cases here. Since the predictive systems contain a probabilistically calibrated predictive distribution out-of-sample, the resulting prediction intervals contain $Y_{n+1}$ with probability $\ge 1 - \alpha$. The prediction intervals are then assessed with respect to their unconditional coverage, average width, and average interval score; the interval score is a popular proper scoring rule to compare prediction intervals, and a lower score is desired \citep{GneitingRaftery2007}. 

Since conformal predictive systems and the RDPS are equivalent in the split conformal setting, we restrict attention to the full conformal setting. Conformal predictive systems with a conformity measure of the form \eqref{eq:conf_stu} have been proposed using ordinary least squares regression \citep[the LSPM;][]{vovk2017} and kernel ridge regression \citep[the KRRPM;][]{vovk2018a}. For comparison, we therefore implement these two regression models within the RDPS. Following \cite{vovk2018a}, we employ kernel ridge regression using the Laplacian (exponential) kernel, with the regularisation parameter estimated using cross-validation on an independent sample of the simulated data. We use the deleted versions of the RDPS, as discussed in Remark \ref{rmk:delete}.

While the conformal predictive systems have a fixed thickness of $1/(n + 1)$, the thickness of the RDPS changes as a function of the covariate. Large covariates typically lead to larger thicknesses, aligning with the idea that the thickness provides a measure of epistemic forecast uncertainty. The distribution of the thickness of the RDPS over the 1000 repetitions is shown in Figure \ref{fig:ss_th}. Since kernel ridge regression is more flexible than ordinary least squares, it typically results in a larger thickness. This is especially true in the nonlinear case.

\begin{figure}[t]
	\centering
    \includegraphics[width=0.33\textwidth]{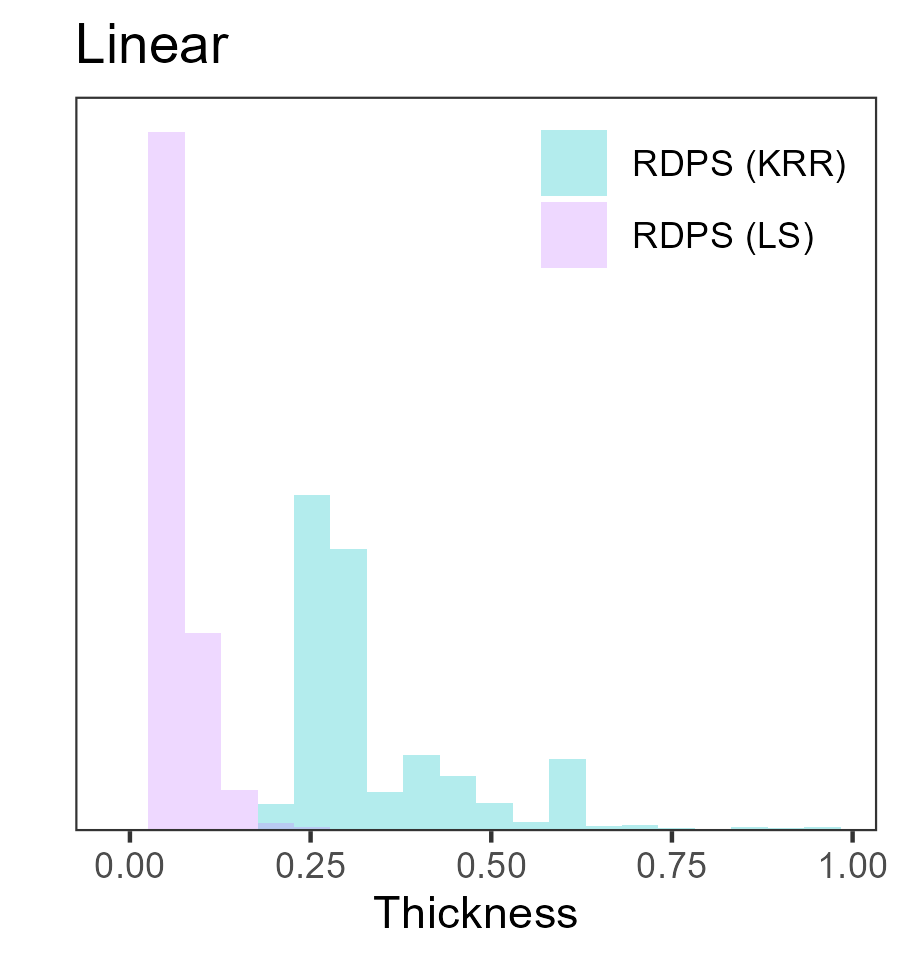}
    \includegraphics[width=0.33\textwidth]{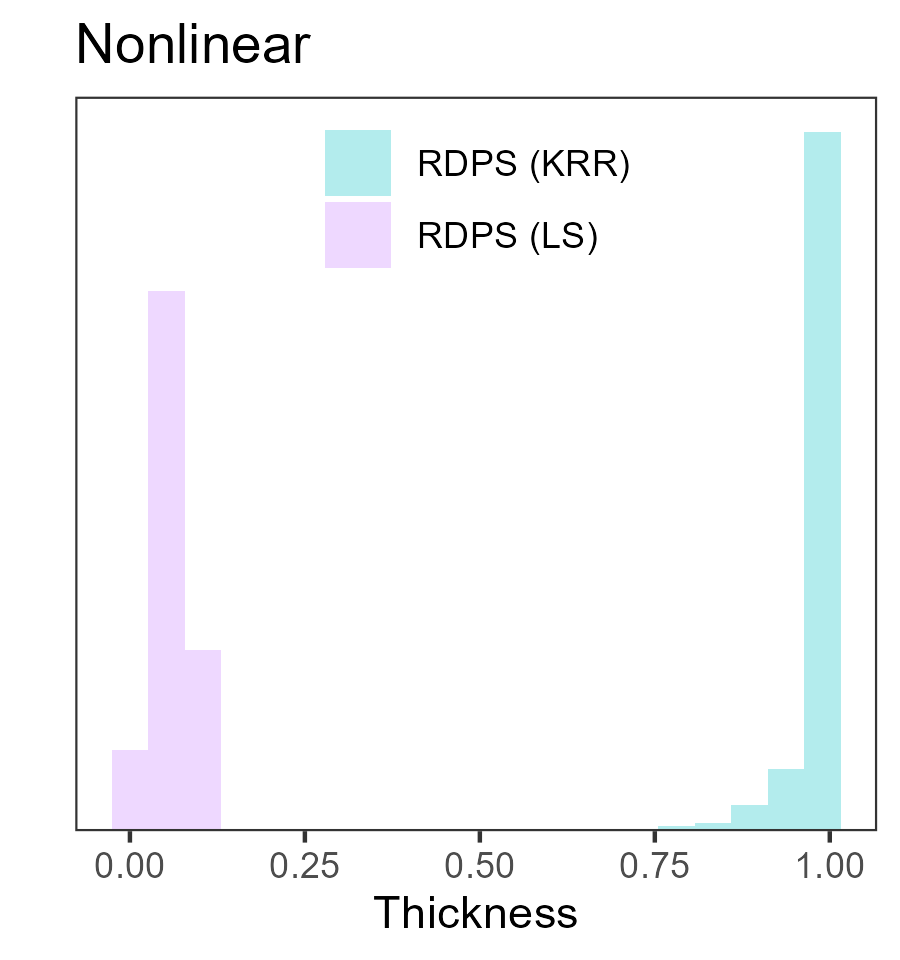}
	\caption{Histograms of the thickness of the four predictive systems.}
	\label{fig:ss_th}
\end{figure}

The coverage, width, and interval score of the four predictive systems are shown in Figure \ref{fig:ss_full_eval_lin} for the linear case. The unconditional coverage of the prediction intervals is approximately equal to the desired coverage level in all cases. Due to the larger thickness of the RDPS with kernel ridge regression, this approach yields slightly wider, more conservative prediction intervals when the level $1 - \alpha$ is close to 0.5. Nonetheless, the width and coverage of all intervals is very similar for the four methods, and they therefore all result in similar interval scores. The more parsimonious least squares approaches slightly outperform their kernel ridge regression counterparts, but there is very little to distinguish the RDPS from the classic conformal predictive systems.

\begin{figure}[b!]
	\centering
	\includegraphics[width=0.32\textwidth]{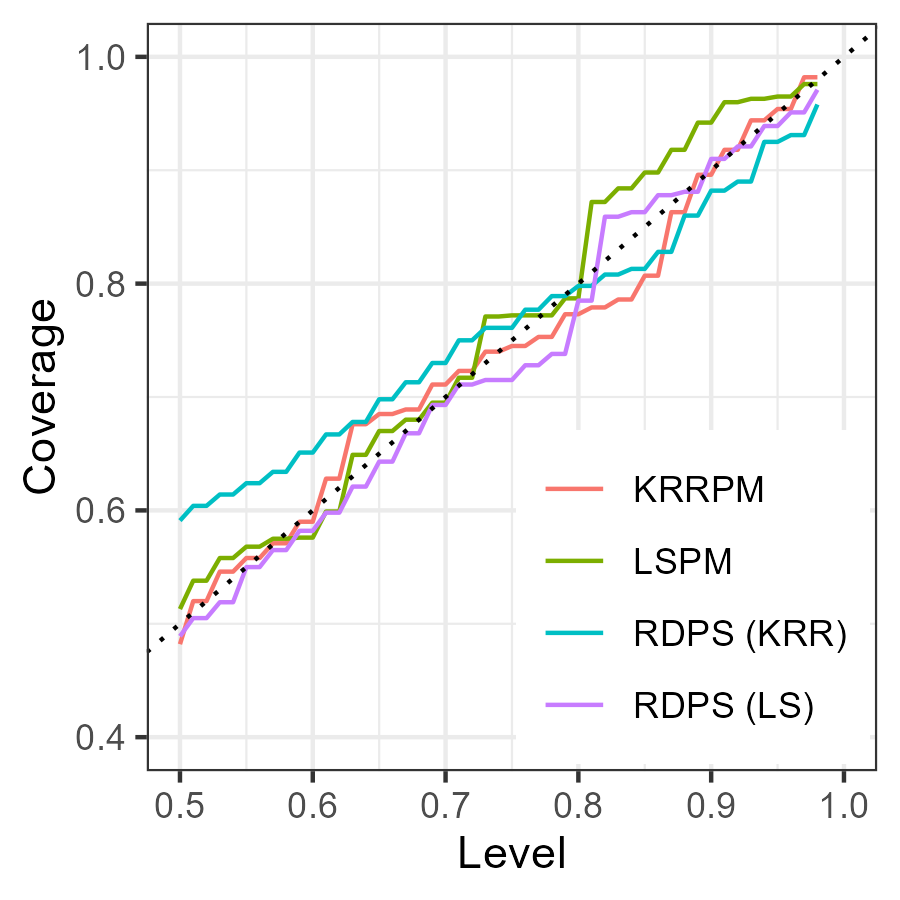}
    \includegraphics[width=0.32\textwidth]{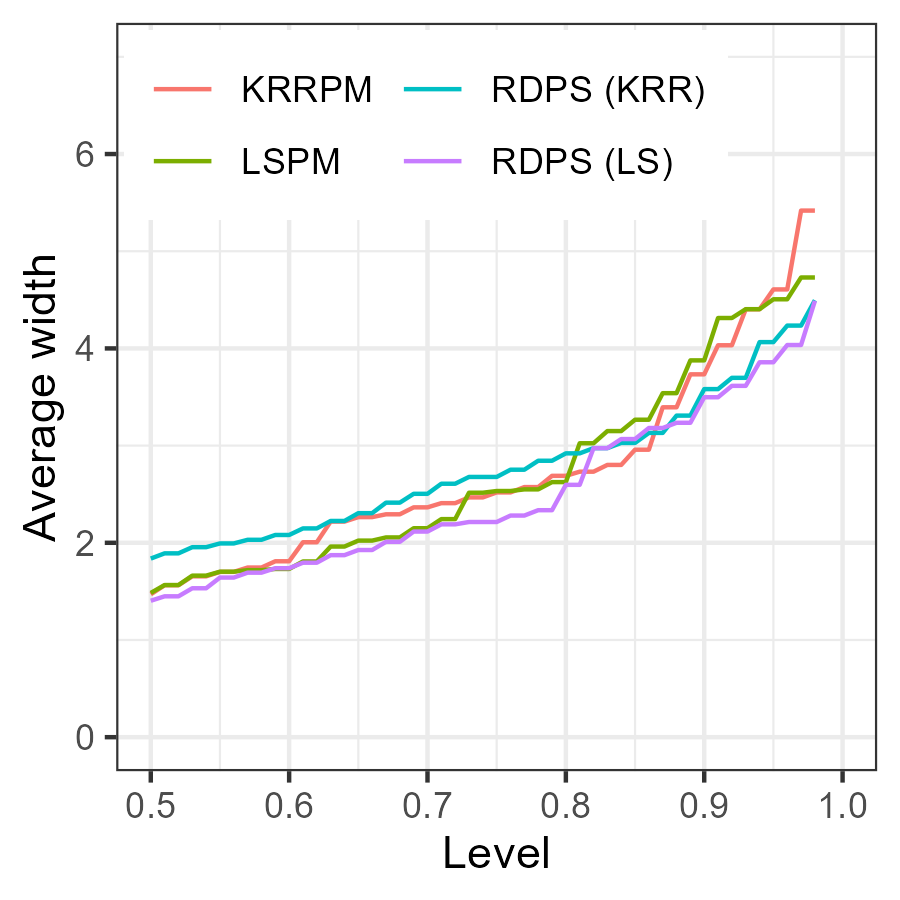}
    \includegraphics[width=0.32\textwidth]{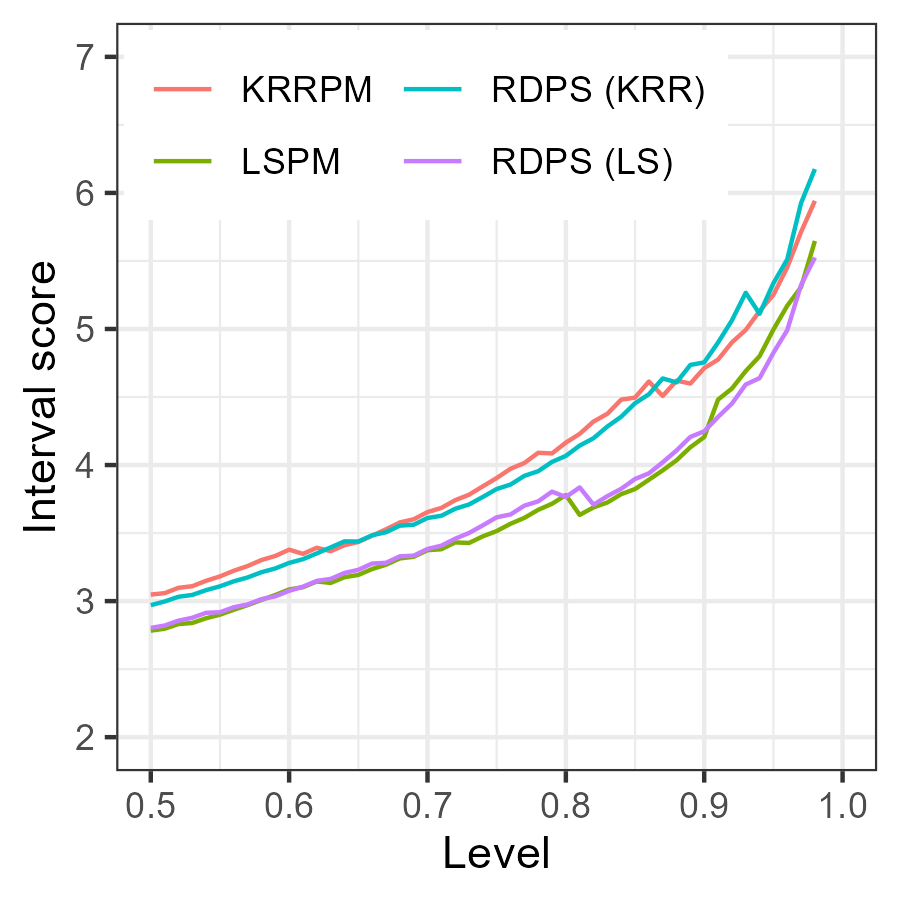}
	\caption{Coverage, average width, and average interval score of prediction intervals obtained from the three predictive systems, shown as a function of the level of the prediction interval. Results are shown for the linear simulated data.}
	\label{fig:ss_full_eval_lin}
\end{figure}

The analogous results are shown for the nonlinear case in Figure \ref{fig:ss_full_eval_non}. The conservativeness of the kernel ridge regression RDPS intervals is even more prevalent in the nonlinear case, owing again to its increased thickness. The kernel ridge regression intervals are nonetheless generally shorter than those obtained using least squares (with either type of predictive system). This is because the least squares predictions do not capture the nonlinearity in the data (Figure \ref{fig:simdata}), leading to highly-dispersed predictive distributions, and thus wide prediction intervals. As a result, there is a clear difference between the interval scores assigned to the prediction intervals obtained from least squares regression and kernel ridge regression. Nonetheless, for both choice of regression model, the results are again similar for conformal predictive systems and the RDPS, with conformal predictive systems offering a slight advantage at most prediction levels. 

While we restrict attention to least squares and kernel ridge regression, since these facilitate a comparison with the LSPM and KRRPM in the classic conformal prediction framework, the RDPS could also be implemented with more flexible regression models. This offers a distinct advantage over conformal predictive systems that only allow for regression models satisfying the monotonicity condition at \eqref{eq:mono}. 

\begin{figure}
	\centering
    \includegraphics[width=0.32\textwidth]{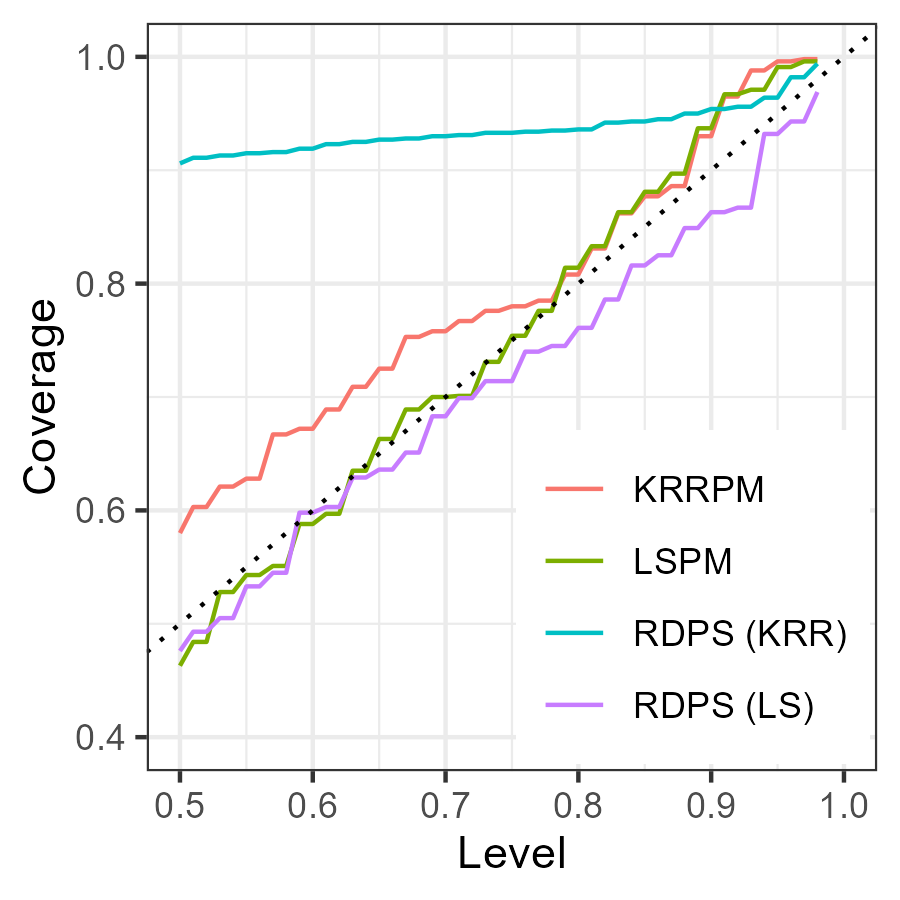}
    \includegraphics[width=0.32\textwidth]{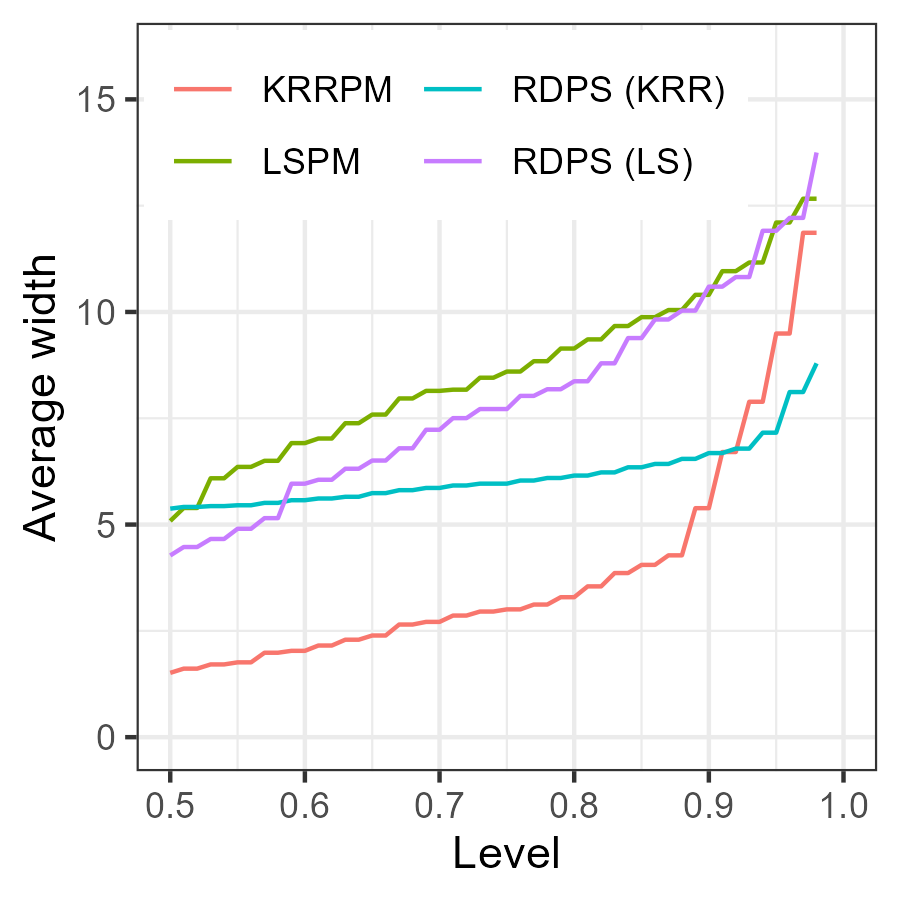}
    \includegraphics[width=0.32\textwidth]{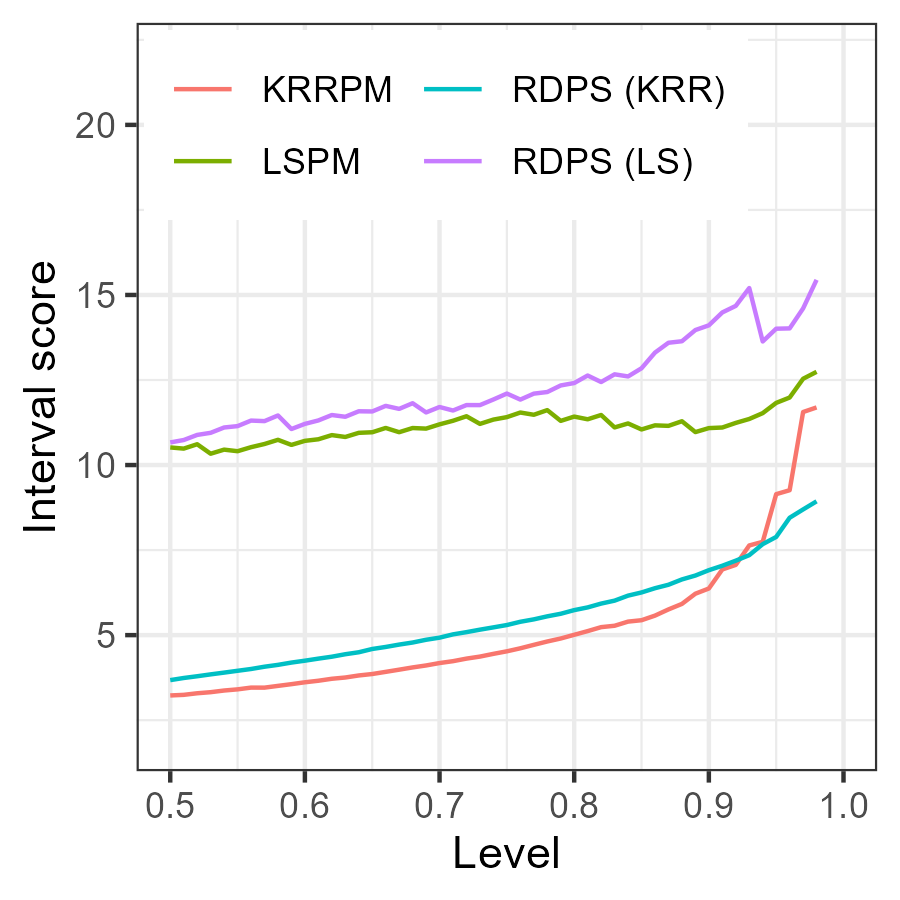}
	\caption{As in Figure \ref{fig:ss_full_eval_lin} but for the nonlinear simulated data.}
	\label{fig:ss_full_eval_non}
\end{figure}

\section{Conclusions}\label{sec:conc}

Conformal prediction provides a means to obtain prediction sets with out-of-sample calibration guarantees. It has recently been shown that conformal predictive systems can be interpreted as one example of a more general framework to construct out-of-sample calibrated predictive systems (i.e. sets of probabilistic forecasts for real-valued outcomes). Alternative approaches have therefore been proposed that similarly construct predictive systems that are guaranteed to contain a calibrated forecast distribution out-of-sample, possibly satisfying a stronger notion of calibration. In this work, we rigorously analyse Residual Distribution Predictive Systems, which are predictive systems obtained via a simple forecasting procedure based on a point-valued regression method and the empirical distribution of its residuals in the training data.

We demonstrate that this approach coincides with conformal predictive systems in the split conformal setting, facilitating a better understanding of classical conformal prediction. This equivalence holds for the typical conformity measures used in practice, based on (transformed) residuals of regression methods. The two approaches differ in the full conformal setting, where the validity of conformal predictive systems depends on the conformity measure satisfying a relatively stringent monotonicity condition. In contrast, the new approach can directly be applied with any method, facilitating the design of predictive systems based on more powerful forecasting methods, such as random forests and neural networks. However, while this approach can be applied alongside any regression method, we find that the thickness of the resulting predictive system can be prohibitively large in practice. This can be circumvented by using robust regression methods.

Moreover, the thickness of the residual distribution predictive systems will generally change depending on the model choice and covariate information. In contrast, the thickness of conformal predictive systems is fixed at $1/(n + 1)$. It is often suggested that the size of the predictive system provides a measure of epistemic forecast uncertainty, and since the thickness of conformal predictive systems does not depend on the covariate information, it suggests that this measure of epistemic uncertainty only accounts for the epistemic uncertainty induced by the sample size. Further work is still needed to assess estimates of this second order forecast uncertainty.

Finally, conformal as well as Residual Distribution Predictive Systems satisfy a fairly weak notion of forecast calibration. While this notion is still useful in practice, other more complex methods have been proposed that generate predictive systems with stronger out-of-sample calibration guarantees. However, these methods typically require large amounts of data, whereas the two approaches discussed herein are more parsimonious, rendering them easy to implement even for small sample sizes. It is in these low data regimes that the RDPS is most beneficial; when data is rich, a split conformal approach can be applied, in which case the novel approach is generally equivalent to conformal predictive systems. Overall, conformal predictive systems remain a valid choice in standard settings, especially when monotonic conformity measures are applicable. Residual Distribution Predictive Systems, on the other hand, offer a more flexible framework that accommodates a broader range of regression techniques, making it particularly suited to scenarios involving complex or nonstandard data structures. Future research could therefore further explore the integration of other regression techniques into this framework.

\bibliographystyle{apalike}
\bibliography{references.bib} 

\appendix

\section{Computation}\label{app:comp}

In this appendix, we provide examples of regression models where the prediction difference $\hat{y}'_{n+1} - \hat{y}_i'$ is an increasing function of $y'$, for all $i = 1, \dots, n$, possibly under some conditions. 
\\
\begin{example}[OLS regression]
    Consider an ordinary least squares regression model that issues predictions $\hat{y}_i'$ given training data $(x_1, y_1), \dots, (x_n, y_n), (x_{n+1}, y')$. Let $H$ denote the hat matrix of the regression fit, with elements $h_{i,j}$, for $i,j = 1, \dots, n+1$. Up to an additive constant that does not depend on $y'$, we have
    \[
        \hat{y}'_{n+1} - \hat{y}'_i = (h_{n+1,n+1} - h_{i,n+1})y'.
    \]
    This difference is therefore an increasing function of $y'$ whenever $h_{n+1,n+1} \ge h_{i,n+1}$ for all $i = 1, \dots, n$. Since $h_{i,n+1} \in [-0.5, 0.5]$, this is satisfied if $h_{n+1,n+1} \ge 0.5$. In contrast, \citet[Proposition 7.4]{vovk2022book} states that the LSPM satisfies the monotonicity condition if $h_{n+1,n+1} < 0.5$. An analogous result holds for kernel regression.
\end{example}
\bigskip
\begin{example}[Kernel-density estimation]
    Consider a kernel-density prediction method that, given training data $(x_1, y_1), \dots, (x_n, y_n), (x_{n+1}, y')$, issues predictions
    \[
    \hat{y}'_i = \frac{\sum_{j=1}^{n} \exp(-\|x_i - x_j\|) y_j + \exp(-\|x_i - x_{n+1}\|)y'}{\sum_{j=1}^{n+1} \exp(-\|x_i - x_j\|)}.
    \]
    When predicting $y_i$, this essentially issues a weighted average of the observations in the training data, where the weights decrease exponentially as the covariate moves away from $x_i$. A fixed bandwidth parameter could also be used in the exponentials without changing the following analysis. Up to an additive constant that does not depend on $y'$, we have
    \begin{align*}
        \hat{y}'_{n+1} - \hat{y}'_i &= y' \left\{ \frac{1}{\sum_{j=1}^{n+1} \exp(-\| x_j - x_{n+1} \|)} - \frac{\exp(-\| x_i - x_{n+1} \|)}{\sum_{j=1}^{n+1} \exp(-\| x_j - x_i \|)} \right\} \\
        &\propto y' \left\{ \sum_{j=1}^{n+1} \Big[ \exp(-\| x_j - x_i \|) - \exp(- (\| x_j - x_{n+1} \| + \| x_i - x_{n+1} \|)) \Big] \right\}.
    \end{align*}    
    By the triangle inequality, $\exp(-\| x_j - x_i \|) \ge \exp( -(\| x_j - x_{n+1} \| + \| x_i - x_{n+1} \|))$, for all $i,j = 1, \dots, n+1$. The factor multiplied by $y'$ above is therefore non-negative, and hence $\hat{y}'_{n+1} - \hat{y}'_i$ is an increasing function of $y'$, for all $i$. The RDPS corresponding to this regression method can therefore be calculated efficiently using two runs of the algorithm. While this approach is not robust, we can simply trim the predictions by replacing $y_j$ and $y'$ above with $\max\{\min\{y_j, t_1\}, t_2\}$ and $\max\{\min\{y', t_1\}, t_2\}$, for some cut-off thresholds $t_1,t_2\in\R$.
\end{example}

Even if $\hat{y}'_{n+1} - \hat{y}'_i$ is not increasing in $y'$, it may still be possible to efficiently implement the RDPS. This is possible, for example, when the predictions are linear functions of the observations in the training data. In particular, if $\hat{y}_i = a_i y' + b_i$, where $a_i$ and $b_i$ are constants in the sense that they do not depend on $y'$ and only on $x_1, \dots, x_{n+1}$ and $y_1, \dots, y_n$, then we have
\[
G_{x_{n+1}}^{\Pnp(y')}(y) = \frac{1}{n+1} \left( \sum_{i=1}^{n} \one\{(a_{n+1} - a_i)y' \le y - b_i)\} + \one\{y' \le y\}\right).
\]
In this case, the indicator functions only change at the values $c_i(y) = (y - b_i)/(a_{n+1} - a_i)$, $i = 1, \dots, n$, and $c_{n+1}(y) = y$. Hence, the bounds $\Pi_{\ell,x_{n+1}}^{\Pn}(y)$ and $\Pi_{u,x_{n+1}}^{\Pn}(y)$ can be calculated by evaluating $G_{x_{n+1}}^{\Pnp(y')}(y)$ at $y' \le c_{(1)}(y)$, $y' \in (c_{(i-1)}(y), c_{(i)}(y)]$, for $i = 2, \dots, n+1$, and $y' > c_{(n+1)}(y)$, and taking the minimum and maximum over these $n+1$ distinct values.

\end{document}